\documentclass[11pt,letterpaper]{article}

\usepackage[margin=0.65in]{geometry}
\usepackage{longtable}
\usepackage{xltabular}
\usepackage{threeparttable}
\usepackage{multirow}
\usepackage{booktabs}
\usepackage{graphicx}
\usepackage{amsmath,amssymb,amsfonts}
\usepackage{natbib}
\usepackage{setspace}
\usepackage{authblk}
\usepackage{tcolorbox}
\usepackage[colorlinks=true, bookmarksopen=true, bookmarksnumbered=true,
            citecolor=blue, linkcolor=blue, urlcolor=blue]{hyperref}
\usepackage[capitalize,noabbrev]{cleveref}
\usepackage{amsthm}

\newtheorem{theorem}{Theorem}[section]
\newtheorem{lemma}[theorem]{Lemma}
\newtheorem{corollary}[theorem]{Corollary}
\newtheorem{proposition}[theorem]{Proposition}
\theoremstyle{definition}

\theoremstyle{remark}

\newtheorem*{remark*}{Remark}

\DeclareMathOperator{\Var}{Var}
\DeclareMathOperator{\Cov}{Cov}
\DeclareMathOperator{\Corr}{Corr}
\DeclareMathOperator{\diag}{diag}
\DeclareMathOperator{\Mult}{Mult}
\DeclareMathOperator{\Bias}{Bias}
\DeclareMathOperator{\SE}{SE}

\hypersetup{pdfencoding=auto}
\begin{document}

\title{A Utility Score Framework for Dose Optimization Studies with Binary Efficacy-Safety Endpoints: Sample Size Determination and Bias Characterization}

\author[1]{Xuemin Gu\thanks{Corresponding author. Email: xgu@relaytx.com
This work was conducted while the author was at Bicycle Therapeutics.}}
\author[1]{Cong Xu}
\author[2]{Lei Xu}
\author[3]{Ying Yuan}
\affil[1]{Relay Therapeutics, 60 Hampshire St, Cambridge, MA 02139, USA}
\affil[2]{Bicycle Therapeutics, 4 Hartwell Place, Lexington, MA 02421, USA}
\affil[3]{Department of Biostatistics, The University of Texas MD Anderson Cancer Center, Houston, TX, USA}
\date{\today}
\maketitle

\begin{abstract}
The FDA's \textit{Project Optimus} initiative emphasizes patient-centered dose selection in oncology that balances efficacy and safety. We develop a design framework for randomized dose optimization studies that uses clinically interpretable utility scores to integrate binary efficacy and safety endpoints. From two candidate doses identified in prior studies, our design uses the observed mean utility scores to select the superior dose. A systematic approach for eliciting clinically meaningful utility scores from efficacy-safety trade-offs is proposed to ensure that design parameters reflect actual clinical priorities. We provide closed-form sample size formulas to achieve prespecified Probabilities of Correct Selection (PCS) under clinically relevant scenarios. The framework generalizes the one stage option of Randomized Optimal Selection (ROSE) design as a special case where utility depends solely on efficacy. We also derive analytical expressions for selection-induced bias, which characterizes how bias from the dose optimization stage may propagate to subsequent confirmatory trials. This bias analysis enables prospective planning for Type~I error control in follow-on stage 2 confirmatory studies.
Extensive simulations ($10^6$ replications per scenario) across diverse scenarios confirm that the sample size formulas reliably achieve target PCS and that analytical bias and Type I error formulas closely approximate empirical estimates. An R package \texttt{DoseOptDesign} implementing the proposed methods is publicly available.
\end{abstract}

\noindent\textbf{Keywords:} Utility Score, Dose Optimization, Bias, Probability of Correct Selection

\section{Introduction}
\label{sec:introduction}

The conventional paradigm of oncology drug development was largely built around cytotoxic chemotherapy agents. With these agents, increasing dose levels may enhance treatment efficacy, but they often come with a corresponding decline in safety. Consequently, the primary objective of Phase I first-in-human trials has been to identify the maximum tolerated dose (MTD), defined as the highest dose that achieves clinically significant efficacy while maintaining acceptable toxicity. In contrast, many recent cancer therapies---such as targeted therapies, immunotherapies, and other personalized medicines---demonstrate clinically meaningful effects at doses lower than the MTD \citep{sachs2016optimal, yan2018optimal}. This allows for extended administration of these agents, making their improved tolerability at lower doses particularly optimal for long-term efficacy.

To facilitate the development of these innovative treatments, the FDA's Oncology Center for Excellence has launched the \textit{Project Optimus} initiative \citep{shah2021optimus}, which emphasizes a patient-centered approach to dose optimization. The objective is no longer to find the highest tolerable dose, but to identify an optimal dose that balances efficacy and safety. The FDA's guidance encourages randomized dose-finding trials sized not for formal hypothesis testing, but to ensure precision in dose selection based on predefined clinical criteria \citep{FDA2024}.

A challenge in designing such trials is how to formally integrate efficacy and safety into a single decision framework. Utility scores, which assign weighted values to combined outcomes, provide an established method for this purpose. Methodological development in this area began with phase I/II designs like EffTox that balanced efficacy-toxicity trade-offs \citep{thall2004efftox}. Subsequent innovations embedded utility scores within Bayesian frameworks (e.g., U-BOIN, BOIN12) to guide dose escalation for targeted therapies \citep{zhou2019uboin,lin2020boin}. More recently, methods have expanded to incorporate adaptive strategies, multi-endpoint settings, and patient-reported outcomes \citep{jiang2023seamless,yuan2024statistical,dangelo2024umet,alger2024upro}.

Despite these advances, two statistical questions remain underexplored, particularly for the common setting of randomized trials with binary endpoints. The first is the lack of a straightforward Frequentist methodology for sample size determination to achieve a desired Probability of Correct Selection (PCS) using utility score-based methods, which co-optimize both efficacy and safety. The second concerns the consequences of the utility score-based dose selection process itself, specifically the selection-induced estimation bias in the efficacy endpoint that propagates to confirmatory trials.

Regarding the first underexplored question for dose optimization studies, the Randomized Optimal Selection (ROSE) design \citep{wang2025rose} offers a complementary approach for sample size calculation, adapting the selection design of Simon et al.\ \citep{simon1985randomized} for ordered doses. ROSE minimizes sample size while ensuring a target PCS for the optimal biological dose (OBD). However, it handles safety only implicitly---favoring lower doses unless efficacy gains justify higher ones---rather than through an explicit utility function. This reliance on an indirect safety mechanism means that while ROSE addresses the sample size problem for efficacy-based selection, it leaves open the corresponding challenges of sample size calculation and bias estimation for designs based on an explicit utility function.

While the statistical properties of selection-induced bias have been characterized for efficacy-based treatment selection \citep{cohen1989two, stallard2008estimation, bowden2008unbiased, bauer2010selection}, these prior works did not address the distinct problem where selection is based on a composite utility score, while the response rate remains the endpoint of interest for future confirmatory trials. The estimation bias of such a mechanism—which is driven by the covariance between the utility score and the final efficacy endpoint—has not been previously characterized. This is the case whether the confirmatory endpoint is a component of the utility function (e.g., response) or an externally correlated endpoint (e.g., progression-free survival).

This paper focuses on a one-stage, two-sample randomized dose optimization study, a design relevant for the common scenario in which a study is conducted immediately before a large-scale Phase 3 registration trial. In this late-stage setting, a one-stage approach is often preferred for two reasons. First, sufficient clinical data are available to narrow the choice to two strong candidate doses, making a single definitive comparison the most direct path to a dose optimization decision. Second, a one-stage design avoids the operational complexity and potential delays of multi-stage approaches, ensuring a timely transition to the pivotal registration trial. 

To support this practical setting, we make three primary contributions: a systematic method for specifying utility scores that directly quantify clinical trade-offs; a methodology for sample size determination to achieve prespecified PCS; and analytical expressions for selection-induced bias. We also demonstrate that the one-stage option of ROSE design is a special case of our utility framework. These three contributions form a statistical framework for designing dose optimization studies that align with the principles of Project Optimus and for quantifying the impact of dose selection on subsequent confirmatory trials.

The remainder of the paper is organized as follows. Section~\ref{sec:framework} presents the utility score framework and dose selection design. Section~\ref{sec:sample_size} describes sample size determination. Section~\ref{sec:bias} provides a detailed characterization of selection-induced bias, with special emphasis on its propagation to time-to-event confirmatory endpoints. Section~\ref{sec:simulations} presents extensive simulation studies, and Section~\ref{sec:discussion} concludes with practical implications and future directions.

\section{Utility Score Framework and Dose Selection Design}
\label{sec:framework}

\subsection{Overview and Motivation}
\label{subsec:design}

We propose a randomized dose optimization design to identify a single clinically optimal dose from two candidate dose levels or dosing regimens. This design is motivated by the FDA's \textit{Project Optimus} initiative \citep{shah2021optimus, FDA2024}, which advocates for patient-centered dosing through balanced assessment of efficacy and safety. The design is intended for settings where early-phase studies have identified two candidate doses with acceptable safety profiles, and a more rigorous evaluation is needed to select the one with the most favorable benefit-risk profile for subsequent trials.

Unlike dose selection strategies rooted in the traditional MTD-seeking paradigm of dose escalation studies, this framework directly adopts the principle of \textit{Project Optimus}, which states that the optimal doses for non-cytotoxic agents (e.g., targeted therapies, immunotherapies) must balance efficacy and tolerability, rather than simply maximize exposure. Our proposed method facilitates the explicit quantification of efficacy-safety trade-offs.

Patients are randomized to two candidate doses: Dose $L$ (Low) and Dose $H$ (High). To illustrate our methodology, we use the following motivating example throughout this paper:
\begin{itemize}
    \item Dose $L$: 5 mg/m$^2$ weekly (QW)
    \item Dose $H$: 8 mg/m$^2$ every two weeks (Q2W)
\end{itemize}
Here, Dose $L$ delivers a higher cumulative dose over a 4-Week cycle, while Dose $H$ results in a higher peak concentration ($C_{\max}$). So, `Low' and `High' are labels for the nominal dose levels to distinguish the two regimens, not a presumption about their relative safety or efficacy.

\subsection{Efficacy-Safety Outcomes and Utility Framework}

Each treated patient provides two binary outcomes:
\begin{itemize}
    \item \textbf{Efficacy endpoint ($X$):} Binary indicator of clinical response
    \begin{equation}
    X = \begin{cases} 1, & \text{Response achieved (e.g., radiographic response, biomarker threshold)} \\ 0, & \text{No response} \end{cases}
    \label{eq:efficacy_def}
    \end{equation}
    
    \item \textbf{Safety endpoint ($Y$):} Binary indicator of absence of certain adverse events of interest
    \begin{equation}
    Y = \begin{cases} 1, & \text{No adverse event} \\ 0, & \text{Adverse event occurred} \end{cases}
    \label{eq:safety_def}
    \end{equation}
\end{itemize}
We denote the response rate by $p = \Pr(X=1)$ and the no-AE rate by $q = \Pr(Y=1)$. These marginal probabilities, together with the correlation $\phi$ between $X$ and $Y$, fully determine the joint distribution of efficacy-safety outcomes.

This binary formulation accommodates categorical endpoints and continuous measurements dichotomized at prespecified threshold values. The convention that $Y=1$ denotes the absence of adverse events, along with the definition of efficacy ($X=1$ for response), ensures that higher values for both endpoints are clinically desirable.

The traditional assumption that higher exposure improves efficacy is often unreliable for novel anticancer therapies. Furthermore, the relationship between exposure and safety is complex, as adverse events may be driven by either peak concentration ($C_{\max}$) or cumulative exposure. This complexity necessitates a framework that directly evaluates the efficacy-safety trade-off without assuming a monotonic dose-response relationship. 

\begin{table}[ht]
\centering
\caption{Indicator Vector, $\mathbf{W}$, for Efficacy-Safety Outcome Combinations and Assigned Utility Scores}
\label{tab:utility01}
\resizebox{\textwidth}{!}{
\begin{tabular}{lccccc}
\toprule
Outcome & Efficacy ($X$) & Safety ($Y$) & Interpretation & $\mathbf{W}=(W_1,W_2,W_3,W_4)$ & Utility Score ($u_k$) \\
\midrule
1 (Best) & 1 & 1 & Response, No AE & (1, 0, 0, 0) & $u_1$\\
2 & 1 & 0 & Response, AE Present & (0, 1, 0, 0) & $u_2$\\
3 & 0 & 1 & No Response, No AE & (0, 0, 1, 0) & $u_3$\\
4 (Worst) & 0 & 0 & No Response, AE Present & (0, 0, 0, 1) & $u_4$\\
\bottomrule
\end{tabular}
}
\end{table}

To address this, our design employs a Utility Score approach. As shown in Table \ref{tab:utility01}, an indicator random vector $\mathbf{W}$ and a utility score are utilized to model and quantify the four mutually exclusive outcomes of the binary efficacy ($X$) and safety ($Y$) endpoints, respectively. The random vector $\mathbf{W} = (W_1, W_2, W_3, W_4)$ follows a Categorical distribution with probability parameter $\mathbf{\pi} = (\pi_1, \pi_2, \pi_3, \pi_4)$, where $\sum_{k=1}^4 \pi_k = 1$. This is equivalent to a Multinomial distribution with a size parameter of 1, and $\pi_k$ represents the probability that $W_k=1$, $k \in \{1,2,3,4\}$. The utility score $u_k$ assigned to each outcome must reflect clinical priorities:
\begin{equation}
u_1 \geq u_2 \geq u_3 \geq u_4.
\label{eq:utility_order01}
\end{equation}
This hierarchy formalizes the clinical judgment that a response with no toxicity ($u_1$) is the best, while a non-response with toxicity ($u_4$) is the worst.

At the end of the study, the \textbf{mean utility score} for each dose is computed as the average utility score across all treated patients:
\begin{equation}
\bar{U}_j = \frac{1}{n} \sum_{i=1}^{n} U_{ij},
\label{eq:observed_u}
\end{equation}
where $U_{ij}$ is the utility score for patient $i$ on dose $j$, $j \in \{L, H\}$. As a random variable, $U_{ij}$ takes one of the values $\{u_1, u_2, u_3, u_4\}$ depending on the patient's efficacy and safety outcomes.

\subsection{Utility Score Specification: The Margin-Based Approach}
\label{subsubsec:utility_specification}

The utility scores ($u_1, u_2, u_3, u_4$) are prespecified design parameters that quantify the clinical value of each of the four possible efficacy--safety outcome combinations in Table~\ref{tab:utility01}. While these scores can be determined through direct elicitation from clinicians, we recommend a \textbf{margin-based approach} that derives the utilities from two clinically meaningful quantities:
\begin{equation}
\left\{ 
\begin{aligned}
\delta > 0 &: \text{ a clinically meaningful margin of \textit{efficacy} (response)}, \\
d > 0 &: \text{ a clinically meaningful margin of \textit{safety} (no-AE)}.
\end{aligned}
\right.
\label{eq:clinical_margins}
\end{equation}

These margins are typically elicited by posing trade-off questions to clinical experts, such as: ``What improvement in response rate would justify accepting a given increase in adverse event rate?'' This approach ensures that the utility scores directly encode actual clinical priorities rather than abstract numerical values.

\paragraph{Clinical Motivation: Utility Independence.}
We adopt a simplifying assumption that the incremental value of achieving a response should not depend on whether the patient also experiences an adverse event. Mathematically, this requires:
\begin{equation}
u_1 - u_3 = u_2 - u_4,
\label{eq:utility_independence}
\end{equation}
or equivalently, $u_1 - u_2 - u_3 + u_4 = 0$. This property, known as \textit{utility independence}, is a reasonable simplification in many clinical settings, particularly in oncology where achieving a response is of primary importance regardless of toxicity status.

\paragraph{Derivation of Utility Scores.}
Without loss of generality, we anchor the utility scale by setting $u_1 = 1$ (best outcome) and $u_4 = 0$ (worst outcome). The utility independence constraint then reduces to $u_2 + u_3 = 1$, leaving one degree of freedom to specify.

The clinical trade-off ratio
\begin{equation}
r = \frac{\delta}{d}
\label{eq:tradeoff_ratio}
\end{equation}
is the ratio of the two clinically meaningful thresholds. Since $\delta$ and $d$ each represent one ``unit'' of clinical significance in their respective domains, a $\delta$ efficacy improvement and a $d$ safety improvement should contribute equally to the mean utility to keep a consistent clinical interpreation of utility score. This requires $u_2 \cdot \delta = u_3 \cdot d$, or equivalently, $u_3/u_2 = r$, where $u_2$ (``response with AE'') and $u_3$ (``no response, no AE'') represent the utility contributions due to pure efficacy and pure safety relative to the worst outcome ($u_4$), respectively. 

Solving these constraints yields:
\begin{equation}
\boxed{u_1 = 1, \quad u_2 = \frac{1}{1+r}, \quad u_3 = \frac{r}{1+r}, \quad u_4 = 0.}
\label{eq:utility_formula}
\end{equation}
The value of $r$ determines the relative importance of efficacy versus safety:
\begin{itemize}
    \item $r < 1$ ($\delta < d$): AEs are tolerable, so $u_2 > u_3$—achieving a response, even with an AE, is valued more than remaining safe without response. This is typical in oncology, where response in serious diseases justifies accepting some toxicity.
    \item $r = 1$ ($\delta = d$): Efficacy and safety are weighted equally, yielding $u_2 = u_3 = 0.5$.
    \item $r > 1$ ($\delta > d$): AEs are serious, so $u_3 > u_2$—avoiding toxicity becomes more important than achieving a response.
\end{itemize}

The ordering constraint $u_1 \geq u_2 \geq u_3 \geq u_4$ is automatically satisfied when $r \leq 1$. When $r > 1$ (safety margin smaller than efficacy margin), the margin-based approach for utility score yields $u_3 > u_2$, violating the assumed preference ranking. In this case, swapping the assignments of $u_2$ and $u_3$ restores the intended interpretation that higher utility scores represent better clinical outcomes. However, $r > 1$ is uncommon in oncology and often indicates that the chosen margins do not reflect the intended clinical priorities; revisiting the specifications is warranted before proceeding.

\paragraph{Alternative Specifications.}
The margin-based approach is parsimonious and clinically grounded, but the sample size methodology developed in subsequent sections is mathematically valid for any set of utility scores satisfying $u_1 \geq u_2 \geq u_3 \geq u_4$. Study teams may use direct elicitation or other methods if preferred; the consequences of relaxing the utility independence assumption are discussed in Section~\ref{subsubsec:sensitivity}.

\subsection{Dose Selection Rule}

The dose selection decision is based on a simple threshold comparison of the observed mean utility scores:
\begin{equation}
\text{Select Dose } H \text{ if } \bar{U}_H - \bar{U}_L > \lambda_u; \quad \text{otherwise, select Dose } L
\label{eq:selection_rule}
\end{equation}
Here, $\lambda_u \geq 0$ is a prespecified decision threshold representing the minimum required utility advantage for Dose $H$ to be selected over Dose $L$. 

When $\lambda_u = 0$, the rule simplifies to selecting the dose with the higher observed mean utility. However, a positive threshold ($\lambda_u > 0$) is warranted in two situations. First, the margin guards against selecting a dose based on random small differences arising from sampling variability. Secondly, it provides a formal mechanism to account for factors not captured in the utility score. For instance, a dose with higher manufacturing complexity, cost, or logistical burden must demonstrate a utility advantage of at least $\lambda_u$ to justify these additional hurdles, thereby promoting more robust and clinically relevant dose selection decisions.

Note that in the ROSE design, which relies solely on efficacy for dose selection, a positive threshold ($\lambda_u > 0$) is required to enforce a preference for the lower dose in the absence of a meaningful efficacy advantage, with the assumption that higher doses lead to higher risks. In contrast, our utility score method explicitly incorporates the efficacy-safety trade-off through the utility scores themselves, making a positive $\lambda_u$ optional rather than a requirement.

\section{Sample Size Determination}
\label{sec:sample_size}

\subsection{Clinical Scenarios for Sample Size Calculation}

To ensure reliable dose selection, the trial is sized to achieve a prespecified PCS under relevant clinical scenarios. A key feature of our framework is that the same clinical margins ($\delta$, $d$) used to derive the utility scores (Section~\ref{subsubsec:utility_specification}) also define the scenarios under which each dose should be selected. This coupling creates a design with strong internal consistency: the magnitude of benefit defined as clinically meaningful for constructing the utility scores is the same magnitude the trial is powered to detect. Consequently, any two doses with a true utility difference smaller than what is defined by these margins are treated as practically equivalent by the design. While this paper focuses on the parsimonious case where the utility and sample size margins are identical, the proposed methods are general and can easily accommodate different margins for each task if desired by the study team.

Using the margins from Equation~\eqref{eq:clinical_margins}, we define scenarios $\mathcal{S}_L$ and $\mathcal{S}_H$ as shown in Table~\ref{tab:selection_criteria01}.

\begin{table}[ht]
\centering
\caption{Decision Scenarios for Sample Size Calculation}
\label{tab:selection_criteria01}
\begin{tabular}{lccc}
\toprule
Scenario & Dose $L$: $(p_L, q_L)$ & Dose $H$: $(p_H, q_H)$ & PCS Requirement \\
\midrule
$\mathcal{S}_L$ & $(p, q)$ & $(p, q - d)$ & $\Pr(\text{select } L) \geq \alpha_L$ \\
$\mathcal{S}_H$ & $(p - \delta, q)$ & $(p, q)$ & $\Pr(\text{select } H) \geq \alpha_H$ \\
\bottomrule
\end{tabular}
\end{table}

\paragraph{Interpretation of Scenarios.}
Under Scenario $\mathcal{S}_L$, Dose $H$ offers no efficacy benefit ($p_H = p_L = p$) but has a safety disadvantage of magnitude $d$ (i.e., $q_H = q_L - d$, meaning a lower no-AE rate). In this case, the correct decision is to conservatively select Dose $L$, and the design must achieve this with probability at least $\alpha_L$. Under Scenario $\mathcal{S}_H$, Dose $H$ offers an efficacy advantage of magnitude $\delta$ (i.e., $p_H - p_L = \delta$) with equivalent safety ($q_H = q_L = q$). Here, the correct decision is to select Dose $H$, which must be achieved with probability at least $\alpha_H$.

\paragraph{Symmetric Utility Differences.}
A key property of the margin-based specification is that the utility advantage of the correct dose is symmetric across scenarios. Under $\mathcal{S}_L$, Dose $L$ is superior with $\mu_L - \mu_H = d \cdot u_3 = \delta/(1+r)$. Under $\mathcal{S}_H$, Dose $H$ is superior with $\mu_H - \mu_L = \delta \cdot u_2 = \delta/(1+r)$. This symmetry arises because the utility scores (Equation~\eqref{eq:utility_formula}) are derived from the same margins ($\delta$, $d$) that define the scenarios, ensuring that a $\delta$ efficacy advantage exactly compensates for a $d$ safety disadvantage. Consequently, both scenarios present identical statistical challenges for dose selection, naturally justifying symmetric PCS targets ($\alpha_L = \alpha_H$).

\subsection{Parameter Estimation from Historical Data}
\label{subsubsec:parameter_estimation}

The sample size calculation requires the following inputs, which can be estimated from historical data or elicited from clinical expertise:
\begin{itemize}
\item $p, q$: Desirable response rate and no-AE rate for the reference dose.
\item $\delta, d$: Clinically meaningful efficacy and safety margins.
\item $\phi$: Correlation between efficacy and safety endpoints.
\item Utility scores: $u_1, u_2, u_3, u_4$ (e.g., derived from $\delta/d$ via Equation~\eqref{eq:utility_formula}).
\item Target PCS levels: $\alpha_L, \alpha_H$.
\end{itemize}

The outcome probabilities $\boldsymbol{\pi} = (\pi_1, \pi_2, \pi_3, \pi_4)$ are fully determined by three quantities: the response rate $p = \Pr(X=1)$, the no-AE rate $q = \Pr(Y=1)$, and the Pearson correlation $\phi$ between $X$ and $Y$. From prior data with the counts $n_{11}, n_{10}, n_{01}, n_{00}$ of outcomes (as defined in Table~\ref{tab:utility01}), these can be estimated as:
\begin{equation}
\hat{p} = \frac{n_{11} + n_{10}}{n}, \quad \hat{q} = \frac{n_{11} + n_{01}}{n}, \quad
\hat{\phi} = \frac{n \cdot n_{11} - (n_{11}+n_{10})(n_{11}+n_{01})}{\sqrt{(n_{11}+n_{10})(n_{11}+n_{01})(n_{10}+n_{00})(n_{01}+n_{00})}}.
\label{eq:estimates}
\end{equation}

The outcome probabilities are then computed as:
\begin{equation}
\label{eq:pi_all}
\left\{
\begin{aligned}
    \pi_1 &= p q + \phi \sqrt{p(1-p) q(1-q)}, \\
    \pi_2 &= p - \pi_1, \\
    \pi_3 &= q - \pi_1, \\
    \pi_4 &= 1 - p - q + \pi_1.
\end{aligned}
\right.
\end{equation}

\paragraph{Constraints on the Correlation Parameter.}
The joint probability $\pi_1 = \Pr(X=1, Y=1)$ must satisfy the Fréchet-Hoeffding bounds to ensure all $\pi_k \geq 0$: $\max(0, p + q - 1) \leq \pi_1 \leq \min(p, q)$. This gives the achievable range for $\phi$:
\begin{equation}
\phi_{\min} = \frac{\max(0, p+q-1) - pq}{\sqrt{p(1-p)q(1-q)}}, \quad \phi_{\max} = \frac{\min(p,q) - pq}{\sqrt{p(1-p)q(1-q)}}.
\label{eq:phi_bounds}
\end{equation}
When estimating from data, $\hat{\phi}$ should be truncated to this range if necessary.

\paragraph{Moments of the Utility Score.}
With utility scores $\mathbf{u} = (u_1, u_2, u_3, u_4)^T$ and probability vector $\boldsymbol{\pi}$, a single patient's utility score $U = \mathbf{u}^T \mathbf{W}$ has moments:
\begin{equation}
\mu = \mathbb{E}[U] = \sum_{k=1}^4 u_k \pi_k, \quad \sigma^2 = \Var(U) = \mathbf{u}^T \left( \diag(\boldsymbol{\pi}) - \boldsymbol{\pi}\boldsymbol{\pi}^T \right) \mathbf{u}.
\label{eq:utility_moments}
\end{equation}

By the Central Limit Theorem, for large $n$, the sample mean utility $\bar{U}_j$ of $n$ patients is
\begin{equation}
\bar{U}_j \approx N\left( \mu_j, \frac{\sigma_j^2}{n} \right).
\end{equation}

Assuming independent samples for Dose $H$ and Dose $L$, the difference $\bar{D} = \bar{U}_H - \bar{U}_L$ has:
\begin{equation}
\bar{D} \approx N\left( \Delta\mu, \frac{\sigma_H^2 + \sigma_L^2}{n} \right),
\label{eq:utility_diff_dist}
\end{equation}
where $\Delta\mu = \mu_H - \mu_L$. This asymptotic normality forms the basis for the sample size calculations below.

\subsection{Approximation Method for Sample Size Calculation}

The sample size \(n\) per arm is determined to ensure that the dose optimization study meets prespecified PCS requirements across clinically plausible scenarios. This is achieved by finding the minimal sample size required to simultaneously satisfy the decision criteria for scenarios \(S_L\) and \(S_H\). This approach generalizes the optimization method of the ROSE design \citep{wang2025rose} to a utility-score framework.

Two PCS requirements in Table~\ref{tab:selection_criteria01} can be expressed as:
\begin{enumerate}
    \item \textbf{Under Scenario \(S_L\):} We require 
    \(\Pr(\bar{U}_H - \bar{U}_L \le \lambda_u \mid S_L) \ge \alpha_L\).
    \item \textbf{Under Scenario \(S_H\):} We require 
    \(\Pr(\bar{U}_H - \bar{U}_L > \lambda_u \mid S_H) \ge \alpha_H\).
\end{enumerate}

Let $\Delta\mu(S_L) < 0$ and $\Delta\mu(S_H) > 0$ be the true mean utility differences under scenarios $S_L$ and $S_H$, respectively. Expressing the PCS requirements in terms of standard normal quantiles yields:
\begin{align}
   \text{Under } S_L: \quad \lambda_u 
   &\ge \Delta\mu(S_L) + z_{\alpha_L} \sqrt{\frac{v(S_L)}{n}}, 
   \label{eq:lambda_lower} \\
   \text{Under } S_H: \quad \lambda_u 
   &\le \Delta\mu(S_H) + z_{1-\alpha_H} \sqrt{\frac{v(S_H)}{n}}, 
   \label{eq:lambda_upper}
\end{align}
where $v(S) = \sigma_H^2(S) + \sigma_L^2(S)$ is the sum of variances under scenario $S$. For any threshold $\lambda_u \in (\Delta\mu(S_L),\, \Delta\mu(S_H))$, it is always possible to find a sample size $n$ large enough to satisfy both requirements simultaneously.

\subsubsection{Direct Approach: Sample Size for a Given Threshold}

For a prespecified threshold $\lambda_u$, each PCS requirement can be solved independently for~$n$ by rearranging Equations~\eqref{eq:lambda_lower} and \eqref{eq:lambda_upper}:
\begin{align}
n_L(\lambda_u) &= \left\lceil 
\frac{z_{\alpha_L}^2 \, v(S_L)}{(\lambda_u - \Delta\mu(S_L))^2} 
\right\rceil, 
\label{eq:n_L_direct} \\[6pt]
n_H(\lambda_u) &= \left\lceil 
\frac{z_{\alpha_H}^2 \, v(S_H)}{(\Delta\mu(S_H) - \lambda_u)^2} 
\right\rceil.
\label{eq:n_H_direct}
\end{align}
The overall required sample size for a given threshold is $n(\lambda_u) = \max\bigl(n_L(\lambda_u),\; n_H(\lambda_u)\bigr)$. Each formula has the familiar structure of a one-sided Z-test sample size detecting a shift of $|\lambda_u - \Delta\mu(S)|$ in a normal mean with variance $v(S)$. This direct approach is useful when clinical convention or regulatory guidance dictates a specific threshold (e.g., $\lambda_u = 0$), and it provides transparency into the relative difficulty of the two scenarios by revealing which is the binding constraint.

\subsubsection{Optimal Threshold: Minimizing Sample Size}

Since $n_L(\lambda_u)$ is monotonically decreasing and $n_H(\lambda_u)$ is monotonically increasing in $\lambda_u$, the minimum of the maximum of $n_L(\lambda_u)$ and $n_H(\lambda_u)$ is achieved at the intersection $n_L(\lambda_u^*) = n_H(\lambda_u^*)$. Setting Equations~\eqref{eq:n_L_direct} and \eqref{eq:n_H_direct} equal (without ceiling functions) and solving yields the jointly optimal sample size:
\begin{equation}
\boxed{n = \left[ \frac{z_{\alpha_L} \sqrt{v(S_L)} 
- z_{1-\alpha_H} \sqrt{v(S_H)}}
{\Delta\mu(S_H) - \Delta\mu(S_L)} \right]^2.}
\label{eq:sample_size}
\end{equation}
The corresponding optimal threshold is:
\begin{equation}
\lambda_u^* = \Delta\mu(S_H) + z_{1-\alpha_H} 
\sqrt{\frac{v(S_H)}{n}}.
\label{eq:optimal_threshold}
\end{equation}
To ensure robust performance, study teams should evaluate several plausible scenarios and select the largest resulting sample size.

\subsection{Exact Calculation for Sample Sizes}

For small sample sizes, the normal approximation may be inaccurate. We develop an exact calculation approach using the Multinomial probability mass function (PMF) directly.

Let $\mathbf{n}_L = (n_{11,L}, n_{10,L}, n_{01,L}, n_{00,L})$ and $\mathbf{n}_H = (n_{11,H}, n_{10,H}, n_{01,H}, n_{00,H})$ denote the count vectors for Dose $L$ and Dose $H$, respectively, where each vector sums to $n$.

Under independence between doses, the joint probability is:
\[
P(\mathbf{n}_L, \mathbf{n}_H) = \Mult(\mathbf{n}_L \mid n, \boldsymbol{\pi}_L) \cdot \Mult(\mathbf{n}_H \mid n, \boldsymbol{\pi}_H).
\]

The exact PCS under Scenario \(S_L\) is:
\begin{equation}
\text{PCS}_L(\lambda_u; n) = \sum_{\mathbf{n}_L, \mathbf{n}_H} \mathbb{I}(\bar{U}_H - \bar{U}_L \leq \lambda_u) \cdot \Mult(\mathbf{n}_L \mid n, \boldsymbol{\pi}_L^{S_L}) \cdot \Mult(\mathbf{n}_H \mid n, \boldsymbol{\pi}_H^{S_L}),
\label{eq:exact_pcs_L}
\end{equation}
and similarly for Scenario \(S_H\):
\begin{equation}
\text{PCS}_H(\lambda_u; n) = \sum_{\mathbf{n}_L, \mathbf{n}_H} \mathbb{I}(\bar{U}_H - \bar{U}_L > \lambda_u) \cdot \Mult(\mathbf{n}_L \mid n, \boldsymbol{\pi}_L^{S_H}) \cdot \Mult(\mathbf{n}_H \mid n, \boldsymbol{\pi}_H^{S_H}).
\label{eq:exact_pcs_H}
\end{equation}

The minimal sample size is found via grid search over \(n\) and \(\lambda_u\) to find the smallest \(n\) such that:
\begin{equation}
\text{PCS}_L(\lambda_u; n) \geq \alpha_L \quad \text{and} \quad \text{PCS}_H(\lambda_u; n) \geq \alpha_H.
\label{eq:exact_sample_size}
\end{equation}

\subsection{Mathematical Justification of the Margin-Based Approach}
\label{subsec:math_justification}

This section provides a rigorous mathematical justification for the utility score specification in Section~\ref{subsubsec:utility_specification}. 

\subsubsection{Marginal Rate of Substitution}

The trade-off between efficacy and safety can be characterized through the \textit{marginal rate of substitution} (MRS)---the increase in efficacy ($\Delta p > 0$) required to compensate for a decrease in safety ($\Delta q < 0$) while maintaining constant mean utility $\mu$:
\begin{equation}
\text{MRS} = \frac{\Delta p}{|\Delta q|} = \frac{\partial \mu / \partial q}{\partial \mu / \partial p}.
\label{eq:mrs_def}
\end{equation}

Substituting Equation~\eqref{eq:pi_all} into Equation~\eqref{eq:utility_moments} and simplifying, the mean utility can be written as:
\begin{equation}
\mu = (u_1 - u_2 - u_3 + u_4) [pq + \phi \sqrt{p(1-p) q(1-q)}]
+ (u_2 - u_4)p + (u_3 - u_4)q + u_4.
\label{eq:mu_expanded}
\end{equation}

The partial derivatives are:
\begin{align}
\frac{\partial \mu}{\partial p}
&= (u_1 - u_2 - u_3 + u_4)
\left[ q + \phi \cdot \frac{1}{2}
\sqrt{\frac{q(1-q)}{p(1-p)}} (1 - 2p) \right] + (u_2 - u_4), \label{eq:dmu_dp} \\
\frac{\partial \mu}{\partial q}
&= (u_1 - u_2 - u_3 + u_4)
\left[ p + \phi \cdot \frac{1}{2}
\sqrt{\frac{p(1-p)}{q(1-q)}} (1 - 2q) \right] + (u_3 - u_4). \label{eq:dmu_dq}
\end{align}

\subsubsection{Simplification Under Utility Independence}

Under the utility independence assumption ($u_1 - u_2 - u_3 + u_4 = 0$), the partial derivatives simplify dramatically:
\begin{equation}
\frac{\partial \mu}{\partial p} = u_2 - u_4, \quad \frac{\partial \mu}{\partial q} = u_3 - u_4.
\end{equation}

The MRS becomes \textbf{independent of $p$, $q$, and $\phi$}:
\begin{equation}
\text{MRS} = \frac{u_3 - u_4}{u_2 - u_4}.
\label{eq:mrs_simplified}
\end{equation}

Recall from Section~\ref{subsubsec:utility_specification} that the clinical margins $\delta$ and $d$ each represent one unit of clinical significance, so a $\delta$ efficacy gain should exactly offset a $d$ safety loss. This implies MRS $= \delta/d = r$. Substituting into Equation~\eqref{eq:mrs_simplified} with $u_4 = 0$ and the utility independence condition $u_2 + u_3 = 1$, we have:
\[
\frac{u_3}{u_2} = r \quad \Rightarrow \quad u_2 = \frac{1}{1+r}, \quad u_3 = \frac{r}{1+r},
\]
which recovers Equation~\eqref{eq:utility_formula}. Therefore, the clinically motivated derivations of utility score in Section~\ref{subsubsec:utility_specification} are uniquely determined by the requirement of a constant MRS equal to the trade-off ratio $r= \delta/d$ under utility independence.

\subsubsection{Sensitivity to the Utility Independence Assumption}
\label{subsubsec:sensitivity}

The margin-based utility specification relies on the utility independence assumption (Equation~\ref{eq:utility_independence}), which implies $u_1 - u_2 - u_3 + u_4 = 0$, or equivalently, $u_2 + u_3 = 1$ when $u_1 = 1$ and $u_4 = 0$. To understand these effects, recall that the mean utility $\mu$ for a dose (Equation~\eqref{eq:mu_expanded}) can be decomposed as:
\begin{equation}
\label{eq:mu_2}
 \mu = \eta \cdot \pi_1 + (u_2 - u_4) p + (u_3 - u_4) q + u_4,   
\end{equation}
where $\eta = u_1 - u_2 - u_3 + u_4$, and $\pi_1 = p q + \phi \sqrt{p(1-p) q(1-q)}$ is the joint probability of achieving both response ($X=1$) and no adverse event ($Y=1$).

Under utility independence ($\eta = 0$, or $u_2 + u_3 = 1$), the mean utility simplifies to a weighted average: $\mu = (u_2 - u_4) p + (u_3 - u_4) q + u_4$. Here, efficacy ($p$) and safety ($q$) contribute additively, with no amplification or dampening from their joint occurrence.

When utility independence is relaxed, the MRS (Equation~\eqref{eq:mrs_def}) becomes dependent on the rates $(p, q)$ and the correlation $\phi$, causing the trade-off ratio to vary across the parameter space, rather than remaining constant at $r = \delta/d$. Additionally, the term $\eta \cdot \pi_1$ introduces an interaction effect in Equation~\eqref{eq:mu_2}:
\begin{itemize}
    \item $\eta > 0$ ($u_2 + u_3 < 1$): Positive interaction—joint success (response + no AE) is ``synergistically valuable,'' boosting $\mu$ more than the sum of individual contributions. This increases both $|\Delta\mu(\mathcal{S}_H)|$ and $|\Delta\mu(\mathcal{S}_L)|$, leading to \textit{smaller} required sample sizes.
    \item $\eta < 0$ ($u_2 + u_3 > 1$): Negative interaction—joint success is undervalued relative to the sum, dampening $\mu$. This decreases both $|\Delta\mu(\mathcal{S}_H)|$ and $|\Delta\mu(\mathcal{S}_L)|$, leading to \textit{larger} required sample sizes.
\end{itemize}

While the sample size methodology developed in this paper is mathematically valid for any utility scores satisfying $u_1 \geq u_2 \geq u_3 \geq u_4$, we recommend the margin-based approach with utility independence for most applications. This specification offers three advantages: (1) it requires only two clinically interpretable inputs ($\delta$, $d$); (2) it guarantees a constant MRS that directly reflects the intended trade-off; and (3) it ensures symmetric utility differences across scenarios (i.e., $|\Delta\mu(\mathcal{S}_L)| = |\Delta\mu(\mathcal{S}_H)|$), which provides a natural justification for setting equal PCS targets ($\alpha_L = \alpha_H$) since both scenarios present the same magnitude of utility difference to detect.

\subsection{ROSE Design as a Special Case}
\label{subsec:rose_special_case}

To demonstrate that our framework generalizes the one-stage option of ROSE design, we consider a special case where utility is defined solely by response: \(u_1 = u_2 = 1\) and \(u_3 = u_4 = 0\). In this setting, the mean utility \(\mu_j\) simplifies to the response rate \(p_j\), and the variance \(\sigma_j^2\) becomes \(p_j(1-p_j)\).

Using the ROSE design parameters from Table 1 of \citet{wang2025rose} (\(p_H = 0.4\), \(\delta = 0.15\), corresponding to \(p_L = 0.25\) in \(S_H\)):
\begin{itemize}
    \item Under \(S_L\): \(\Delta\mu(S_L) = 0\) and \(v(S_L) = 2 \times 0.4 \times 0.6 = 0.48\),
    \item Under \(S_H\): \(\Delta\mu(S_H) = 0.15\) and \(v(S_H) = 0.4 \times 0.6 + 0.25 \times 0.75 = 0.4275\).
\end{itemize}
Solving Equation \eqref{eq:sample_size} with \(\alpha_L = \alpha_H = 0.8\) yields \(n = 58\) per arm and \(\lambda_u \approx 0.077\). These results \textbf{exactly match} those reported in the original ROSE design (Table 1 of \citealp{wang2025rose}), confirming that our utility-based framework fully encompasses ROSE when safety is ignored.

\section{Selection-Induced Bias Characterization}
\label{sec:bias}

The dose optimization design, which selects the dose with the higher observed mean utility, can introduce a positive selection-induced bias in the estimate of the chosen dose's efficacy. This bias is most pronounced under the null hypothesis, where the two doses have identical true efficacy and safety profiles.

The selection-induced bias directly affects the interpretation of the point estimate of the selected dose's efficacy at the end of the dose optimization study, potentially leading to over-optimistic assessments. The understanding of selection-bias is also crucial for planning future studies because biased estimates from the optimization stage may result in incorrect hypotheses or assumptions being used in sample size calculations, thereby compromising the power or validity of subsequent trials. Additionally, if data from the dose optimization study are combined with data from a future confirmatory trial (a common practice to maximize statistical power), the bias will propagate to the final analysis, potentially inflating the Type I error rate.

Although this paper focuses on one-stage dose optimization, the selected dose will typically advance to a Phase 3 registration study. The following subsections characterize the bias and its downstream consequences for both binary response rate and time-to-event confirmatory endpoints, enabling sponsors to make informed decisions about trial design parameters and data pooling strategies.

\subsection{Bias in Response Rate due to Dose Selection}
\subsubsection{Bias in Response Rate Estimation}

The bias is defined as the difference between the expected response rate of the selected dose and the true response rate:
\[
\Bias = \mathbb{E}[\hat{p}_{\text{selected}}] - p,
\]
where \(\hat{p}_{\text{selected}}\) is the observed response rate of the chosen dose. 
Under the null hypothesis ($p_L = p_H = p$, $q_L = q_H = q$), we derive the bias as:
\begin{equation}
\boxed{\Bias = \frac{\Cov(X, U)}{\sigma_U \sqrt{n}} \cdot \frac{1}{\sqrt{\pi}} \exp\left( -\frac{\lambda_u^2 n}{4 \sigma_U^2} \right),}
\label{eq:bias_final}
\end{equation}
where \(\sigma^2_U = \Var(U)\) is the variance of the utility score \(U\), $\Cov(X, U) = E[XU] - E[X]E[U] = \sum_{i=1}^4 x_k u_k \pi_k - p \mu$, and $\pi$ in the denominator denotes the mathematical constant ($\pi \approx 3.14159$), not a probability parameter.

The bias is maximized when selection is based solely on efficacy (\(\Cov(X, U) = \sqrt{\Var(X) \sigma_U^2}\), as in the ROSE design with \(u_1 = u_2 = 1\), \(u_3 = u_4 = 0\)) and when \(\lambda_u = 0\) (no additional hurdle for dose selection). In this case, the exponential term becomes 1, yielding:
\begin{equation}
\Bias_{\max} = \frac{\sqrt{p(1-p)}}{\sqrt{n \pi}}.
\label{eq:bias_max}
\end{equation}

\subsubsection{Derivation of the Bias Formula}
\label{subsec:bias_derivation}

Under the null hypothesis (\(p_L = p_H = p\), \(q_L = q_H = q\)), the mean utility scores 
for both doses are independent and identically distributed, with 
\(\bar{U}_H, \bar{U}_L \overset{\text{iid}}{\sim} N(\mu, \sigma_U^2 / n)\), where \(\mu = \mathbb{E}[U]\).

The utility difference \(D = \bar{U}_H - \bar{U}_L\) follows \(D \sim N(0, 2\sigma_U^2 / n)\). 
The response rate of the selected dose is:
\begin{equation}
\hat{p}_{\text{selected}} =
\begin{cases}
\hat{p}_H, & \text{if } D > \lambda_u \\
\hat{p}_L, & \text{if } D \le \lambda_u
\end{cases}
\end{equation}
where, under the null, $E[\hat{p}_H]=E[\hat{p}_L]=p$. 

By symmetry under the null:
\begin{equation}
\mathbb{E}[\hat{p}_{\text{selected}}] - p = \mathbb{E}\left[ (\hat{p}_H - p) \cdot \mathbf{1}\{D > \lambda_u\} \right] 
+ \mathbb{E}\left[ (\hat{p}_L - p) \cdot \mathbf{1}\{D \le \lambda_u\} \right]. 
\end{equation}

Under the Central Limit Theorem, \((\hat{p}_j, \bar{U}_j)\) are approximately jointly normal. 
For jointly normal random variables, the conditional expectation is:
\begin{equation}
\mathbb{E}[\hat{p}_j - p \mid \bar{U}_j] = \frac{\Cov(\hat{p}_j, \bar{U}_j)}{\Var(\bar{U}_j)}(\bar{U}_j - \mu) 
= \frac{\Cov(X, U)}{\sigma_U^2}(\bar{U}_j - \mu),
\end{equation}
where the simplification uses \(\Cov(\hat{p}_j, \bar{U}_j) = \Cov(X, U)/n\) and \(\Var(\bar{U}_j) = \sigma_U^2/n\).

Since the selection event \(\{D > \lambda_u\}\) is determined by \((\bar{U}_H, \bar{U}_L)\), 
we apply the law of iterated expectations:
\begin{equation}
\label{eq:average_high}
\mathbb{E}\left[ (\hat{p}_H - p) \cdot \mathbf{1}\{D > \lambda_u\} \right] 
= \mathbb{E}\left[ \mathbb{E}[\hat{p}_H - p \mid \bar{U}_H] \cdot \mathbf{1}\{D > \lambda_u\} \right]
= \frac{\Cov(X, U)}{\sigma_U^2} \mathbb{E}\left[ (\bar{U}_H - \mu) \cdot \mathbf{1}\{D > \lambda_u\} \right].
\end{equation}

Define standardized variables \(Z_j = (\bar{U}_j - \mu)/(\sigma_U / \sqrt{n})\), which are i.i.d.\ \(N(0,1)\) for $j \in \{H, L\}$. By substituting \(\bar{U}_j - \mu = (\sigma_U/\sqrt{n}) Z_j\) into the expression for the high-dose arm in Equation~\eqref{eq:average_high}, we get:
\begin{equation}
\mathbb{E}\left[ (\hat{p}_H - p) \cdot \mathbf{1}\{D > \lambda_u\} \right] 
= \frac{\Cov(X, U)}{\sigma_U\sqrt{n}} \mathbb{E}\left[ Z_H \cdot \mathbf{1}\{D > \lambda_u\} \right].
\end{equation}
Similar expression holds for the low-dose term. Combining both terms yields:
\begin{equation}
    \Bias =\mathbb{E}[\hat{p}_{\text{selected}}] - p = \frac{\Cov(X, U)}{\sigma_U \sqrt{n}} \mathbb{E}\left[ Z_H \cdot \mathbf{1}\{D > \lambda_u\} + Z_L \cdot \mathbf{1}\{D \le \lambda_u\} \right].    
\end{equation}

Since \(D = (\sigma_U/\sqrt{n})(Z_H - Z_L)\), the selection rule \(D > \lambda_u\) is equivalent to \(Z_H - Z_L > k\), where \(k = \lambda_u\sqrt{n}/\sigma_U\). By Lemma~\ref{lem:truncated}, the expectation evaluates to:
\begin{equation}
    \mathbb{E}\left[ Z_H \cdot \mathbf{1}\{Z_H - Z_L > k\} 
    + Z_L \cdot \mathbf{1}\{Z_H - Z_L \le k\} \right] 
    = \frac{1}{\sqrt{\pi}} e^{-k^2/4}.   
\end{equation}

Substituting $k^2/4 = \lambda_u^2 n / (4\sigma_U^2)$ yields the final result in Equation~\eqref{eq:bias_final}. When $\lambda_u = 0$ (equivalently, $k = 0$), this reduces to $\mathbb{E}[\max(Z_1, Z_2)] = 1/\sqrt{\pi}$, recovering the classical result that \citet{bauer2010selection} used to derive selection bias $m_1(2)\sigma_U/\sqrt{n}$ for two treatments with equal means. The exponential term $e^{-k^2/4}$ extends bias calculation to threshold-based selection, while the $\Cov(X, U)$ term in Equation~\eqref{eq:bias_final} generalizes to settings where the selection criterion and estimation use different variables. 

The exponential term $\exp(-\lambda_u^2 n / 4\sigma_U^2)$ reflects the mechanism by which thresholding reduces selection bias. Without a threshold ($\lambda_u = 0$), the bias arises from always selecting the dose with the higher observed utility, which preferentially selects doses that have experienced favorable random fluctuations. When a positive threshold is imposed, selection occurs only when the utility difference exceeds $\lambda_u$, corresponding to the tails of the distribution where the conditional expectation $\mathbb{E}[Z_H\mid D>\lambda_u]$ is largest. As $\lambda_u$ increases, the probability mass in these tails decreases exponentially, thereby attenuating the bias.

\subsection{Implications for Confirmatory Trial Planning: Type I Error Inflation}
\label{sec:type_one_error_binary}

To illustrate the practical impact of selection bias on subsequent studies, we consider a hypothetical two-stage clinical development pathway where the dose optimization study (Stage~1 with $n_1$ patients per dose) is followed by a confirmatory trial (Stage~2 with $n_2$ newly enrolled patients on the selected dose). This two-stage framing is used solely to characterize how bias propagates to downstream inference; it does not represent a proposed two-stage design.

If response rate data from both stages is pooled, the combined response rate across both stages is:
\begin{equation}
\hat{p}_{\text{combined}} = \frac{n_1 \hat{p}_{\text{selected}} + n_2 \hat{p}_2}{n_1 + n_2},
\label{eq:combined_response}
\end{equation}
where \(\hat{p}_2\) is the response rate observed in Stage 2. Under the null hypothesis, the expected combined response rate is:
\begin{equation}
\mathbb{E}[\hat{p}_{\text{combined}}] = \frac{n_1}{n_1 + n_2}\mathbb{E}[\hat{p}_{\text{selected}}] + \frac{n_2}{n_1 + n_2} p.
\end{equation}

Thus, the bias in the combined estimate is:
\begin{equation}
\Delta p_{\text{combined}} = \frac{n_1}{n_1 + n_2} \cdot \Bias,
\label{eq:combined_bias}
\end{equation}
where Bias is given by Equation~\eqref{eq:bias_final}. This shows that the bias is diluted by the confirmatory stage, with the dilution factor \(\frac{n_1}{n_1 + n_2}\). The maximum bias occurs at the end of the dose optimization stage when \(n_2 = 0\).

\subsubsection{Type I Error Inflation for Z-test}

Consider testing the null hypothesis \(H_0: p \le p_0\) at a nominal one-sided significance level \(\alpha\) using the Z-test statistic:
\begin{equation}
Z = \frac{\hat{p}_{\text{combined}} - p_0}{\SE_0},
\end{equation}
where \(\SE_0 = \sqrt{p_0(1-p_0)/(n_1 + n_2)}\) is the standard error under the null hypothesis. The null hypothesis is rejected when \(Z > z_{1-\alpha}\), where \(z_{1-\alpha}\) is the \((1-\alpha)\) quantile of the standard normal distribution.

Because of the positive selection-induced bias, the actual distribution of \(\hat{p}_{\text{combined}}\) under the null is shifted by \(\Delta p_{\text{combined}}\), leading to a change to the test statistics:
\begin{equation}
    \Bias(Z_{\text{Binary}}) = \frac{\Delta p_{\text{combined}}}{\SE_0}.    
\end{equation}

Consequently, the inflated Type I error rate for the Z-test becomes:
\begin{equation}
    \text{Type I Error}_{\text{Z}} = \Pr\left(Z > z_{1-\alpha} \mid p = p_0\right) = 1 - \Phi\left(z_{1-\alpha} - \Bias(Z_{\text{Binary}})\right),
\label{eq:type_one_error_z}
\end{equation}
where \(\Phi(\cdot)\) denotes the standard normal cumulative distribution function.

Substituting the utility-based bias from Equation~\eqref{eq:bias_final} yields:
\begin{equation}
    \Delta p_{\text{combined}} = \frac{n_1}{n_1 + n_2} \cdot \frac{\Cov(X, U)}{\sigma_U \sqrt{n_1}} \cdot \frac{1}{\sqrt{\pi}} \exp\left( -\frac{\lambda_u^2 n_1}{4 \sigma_U^2} \right).
\label{eq:delta_p_utility}
\end{equation}

Using the maximum bias from Equation~\eqref{eq:bias_max}, we obtain an upper bound:
\begin{equation}
\Delta p_{\text{max,combined}} = \frac{n_1}{n_1 + n_2} \cdot \frac{\sqrt{p_0(1-p_0)}}{\sqrt{n_1 \pi}},
\label{eq:max_combined_bias}
\end{equation}
and the corresponding upper bound on Type I error for the Z-test:
\begin{equation}
\text{Type I Error}_{\text{Z,max}} = 1 - \Phi\left(z_{1-\alpha} - \Bias(Z_{\text{max,Binary}})\right).
\label{eq:type_one_error_z_max}
\end{equation}
where 
\begin{equation}
   \Bias(Z_{\text{max,Binary}}) = \frac{\Delta p_{\text{max,combined}}}{\SE_0}. 
\end{equation}

\subsubsection{Type I Error Inflation for Binomial Test}

For an exact binomial test, let \(k_c\) denote the critical value determined under the null hypothesis such that:
\begin{equation}
\Pr(X > k_c \mid n_1 + n_2, p_0) \le \alpha,
\end{equation}
where \(X\) is the total number of responders. Due to the positive selection-induced bias, the true success probability under the null is effectively shifted to \(p_0 + \Delta p_{\text{combined}}\). The inflated Type I error rate for the binomial test is:
\begin{equation}
\text{Type I Error}_{\text{Bin}} = \Pr(X > k_c \mid n_1 + n_2, p_0 + \Delta p_{\text{combined}}) = 1 - F_{n_1+n_2,\, p_0 + \Delta p_{\text{combined}}}(k_c),
\label{eq:type_one_error_bin}
\end{equation}
where \(F_{n,p}(\cdot)\) denotes the cumulative distribution function of the binomial distribution with sample size \(n\) and success probability \(p\).

Similarly, using the maximum bias bound yields:
\begin{equation}
\text{Type I Error}_{\text{Bin,max}} = 1 - F_{n_1+n_2,\, p_0 + \Delta p_{\text{max,combined}}}(k_c).
\label{eq:type_one_error_bin_max}
\end{equation}
These analytical expressions provide study teams with tools to anticipate Type I error inflation when planning confirmatory trials that pool data from preceding dose optimization study.

\subsection{Bias Propagation to Time-to-Event Confirmatory Endpoints}
\label{subsec:bias_tte}

In many oncology development programs, the dose optimization study (Stage~1) relies on short-term binary endpoints such as objective response rate (ORR), whereas the confirmatory trial (Stage~2) uses time-to-event (TTE) endpoints such as progression-free survival (PFS) or overall survival (OS). When response and survival are positively correlated---as is typical in oncology, where responders tend to have longer survival---the selection-induced bias in the response rate can propagate to the TTE endpoint. The magnitude of this propagation depends on the strength of the response--survival association.

\subsubsection{Landmark Survival Test}
\label{subsubsec:landmark_test}

\paragraph{Bias.}
Since the $\tau$-month survival indicator $S(\tau) = \mathbf{1}\{T > \tau\}$ is a binary endpoint with the same algebraic structure as the response indicator $X$, the general bias formula (Theorem~\ref{thm:bias_general}) applies directly with $X = S(\tau)$:
\begin{equation}
\Bias(\hat{S}(\tau)) = \frac{\Cov(S(\tau), U)}{\sigma_U \sqrt{n_1}} \cdot \frac{1}{\sqrt{\pi}}\exp\!\left(-\frac{\lambda_u^2 n_1}{4\sigma_U^2}\right),
\label{eq:bias_landmark}
\end{equation}

where $\Cov(S(\tau), U)$ depends on the joint distribution of survival at time $\tau$ and the utility score. In practice, $\Cov(S(\tau), U)$ is estimated directly from Stage~1 data for the selected dose, providing a plugin estimate that requires no parametric assumptions about the response-survival relationship. Under a positive association between response and survival, $\Cov(S(\tau), U) > 0$ leads to an overestimation of the landmark survival probability.

An upper bound on the landmark bias can be obtained by noting that $|\Cov(S(\tau), U)| \leq \sigma_{S(\tau)} \cdot \sigma_U$, yielding:
\begin{equation}
\Bias_{\max}(\hat{S}(\tau))
= \frac{\sqrt{S_0(\tau)(1-S_0(\tau))}}{\sqrt{n_1 \pi}}
  \cdot \exp\!\left(-\frac{\lambda_u^2 n_1}{4\sigma_U^2}\right).
\label{eq:tte_upper}
\end{equation}

When response and survival are independent, $\Cov(S(\tau), U) = 0$ and the landmark bias vanishes entirely, regardless of the selection bias in the response rate.

\paragraph{Type~I Error.}
For a one-sided Z-test of $H_0\!: S(\tau) \leq S_0$, the test statistic is:
\begin{equation}
Z_{\text{Landmark}}
= \frac{\hat{S}(\tau)_{\text{comb}} - S_0}{\SE_0},
\label{eq:z_landmark}
\end{equation}
where $\SE_0 = \sqrt{S_0(1-S_0)/(n_1+n_2)}$. After applying the stage-combination dilution factor $n_1/(n_1+n_2)$ from Equation~\eqref{eq:combined_bias}, the bias in the test statistic is:
\begin{equation}
\Bias(Z_{\text{Landmark}})
= \frac{\Bias(\hat{S}(\tau))_{\text{comb}}}{\SE_0}.
\label{eq:bias_z_landmark}
\end{equation}
The inflated Type~I error follows from the shifted normal model:
\begin{equation}
\text{Type~I Error}_{\text{Landmark}}
= 1 - \Phi\!\left(z_{1-\alpha}
  - \Bias(Z_{\text{Landmark}})\right).
\label{eq:type1_landmark}
\end{equation}

\subsubsection{One-sample Exponential Test}
\label{subsubsec:exp_test}

\paragraph{Bias.}
For a one-sample test comparing the selected dose to a historical control hazard $\lambda_0$, the test operates on the log-hazard scale. The bias propagates through the delta method chain: from the mean survival time $\bar{T}$ (whose bias follows Theorem~\ref{thm:bias_general} with $W = T$) to the hazard rate and then to the log-hazard (derivations in Appendix~\ref{app:bias_tte_derivations}):
\begin{align}
\Bias(\hat{\lambda})
  &\approx -\lambda_0^2 \cdot \Bias(\bar{T}),
  \label{eq:bias_hazard} \\
\Bias(\log\hat{\lambda})_{\text{comb}}
  &= \frac{\Bias(\hat{\lambda})_{\text{comb}}}{\lambda_0}
   = -\lambda_0 \cdot \frac{n_1}{n_1+n_2} \cdot \Bias(\bar{T}),
  \label{eq:bias_log_hazard_comb}
\end{align}
where $\Bias(\bar{T})$ follows Theorem~\ref{thm:bias_general} with $W = T$ and is estimated from Stage~1 data via $\Cov(T, U)$. The approximation symbol $\approx$ in the bias formulas arises from the delta method chain. The negative sign indicates that positive bias in mean survival translates to an underestimated hazard rate: the selected dose appears to have better survival than reality.

\paragraph{Type~I Error.}
The MLE $\hat{\lambda} = D/\sum T_i$ satisfies $\Var(\log\hat{\lambda}) = 1/D$ with $D$ the expected number of events under the null. The test statistic is:
\begin{equation}
Z_{\text{Exp}}
= (\log\hat{\lambda} - \log\lambda_0)\sqrt{D}.
\label{eq:z_exp}
\end{equation}
The test rejects $H_0\!: \lambda \geq \lambda_0$ when $Z_{\text{Exp}} \leq -z_{1-\alpha}$. The bias in the test statistic is:
\begin{equation}
\Bias(Z_{\text{Exp}})
= \Bias(\log\hat{\lambda})_{\text{comb}} \cdot \sqrt{D}.
\label{eq:bias_z_exp}
\end{equation}
Since $\Bias(\bar{T}) > 0$ implies $\Bias(Z_{\text{Exp}}) < 0$, the inflated Type~I error is:
\begin{equation}
\text{Type~I Error}_{\text{Exp}}
= \Phi\!\left(-z_{1-\alpha} - \Bias(Z_{\text{Exp}})\right).
\label{eq:type1_exp}
\end{equation}

\subsubsection{Two-sample Wald Test for Cox Regression}
\label{subsubsec:cox_test}

\paragraph{Bias.}
For a two-sample Cox test with 1:1 randomization, the log hazard ratio is $\hat{\beta} = \log(\hat{\lambda}_{\text{trt}} / \hat{\lambda}_{\text{ctrl}})$. Since the control arm is unaffected by dose selection, the combined-stage bias is (Appendix~\ref{app:bias_tte_derivations}):
\begin{equation}
\operatorname{Bias}(\hat{\beta})_{\operatorname{comb}}
= -\lambda_0 \cdot \frac{n_1}{n_1+n_2} \cdot \Bias(\bar{T}).
\label{eq:bias_beta_comb}
\end{equation}
Note that this has the same functional form as $\Bias(\log\hat{\lambda})_{\text{comb}}$ in Equation~\eqref{eq:bias_log_hazard_comb}, since both arise from the delta method transformation $h(\lambda) = \log(\lambda)$ around $\lambda_0$. Under the Cox model, since $\mathbb{E}[\hat{\beta}] = 0$ in the absence of selection bias, the following is exact under the null hypothesis:
\begin{equation}
\label{eq:beta_bias}
\mathbb{E}[\hat{\beta}]
= \operatorname{Bias}(\hat{\beta})_{\mathrm{comb}}.
\end{equation}

\paragraph{Type~I Error.}
The Cox Wald statistic is:
\begin{equation}
Z_{\text{Cox}}
= \hat{\beta} \cdot \sqrt{\frac{D_{\text{total}}}{4}},
\label{eq:z_cox_wald}
\end{equation}
where $D_{\text{total}}$ is the total number of events across both arms and $\mathcal{I} = D_{\text{total}}/4$ is the Fisher information under equal allocation. The test rejects $H_0$ when $Z_{\text{Cox}} \leq -z_{1-\alpha}$. By linearity of expectation and Equation~\eqref{eq:beta_bias}, the bias in the test statistic is:
\begin{equation}
\Bias(Z_{\text{Cox}})
= \operatorname{Bias}(\hat{\beta})_{\operatorname{comb}}
  \cdot \sqrt{\frac{D_{\text{total}}}{4}},
\label{eq:bias_z_cox}
\end{equation}
which is exact given the asymptotic normality of $\hat{\beta}$ (Proposition~\ref{app:tte_type1_proofs}). The Type~I error follows from the lower-tail rejection:
\begin{equation}
\operatorname{Type~I~Error}_{\text{Cox}}
= \Phi\!\left(-z_{1-\alpha} - \Bias(Z_{\text{Cox}})\right).
\label{eq:type1_cox_main}
\end{equation}
Since $\Bias(\bar{T}) > 0$ implies $\Bias(Z_{\text{Cox}}) < 0$, the Type~I error exceeds the nominal level $\alpha$.

\section{Simulation Studies}
\label{sec:simulations}

We conducted extensive Monte Carlo simulations to evaluate the operating characteristics of the proposed utility-based selection framework. These simulations assessed three key properties: the validity of the sample size calculations, the magnitude of estimation bias, and the inflation of Type I error rates.

\subsection{Simulation Settings}

\subsubsection{Correlated Data Generation}
\label{subsubsec:sim_datagen}

Each simulated patient has three endpoints: binary efficacy $X_i \sim \text{Bernoulli}(p)$, binary safety $Y_i \sim \text{Bernoulli}(q)$, and (when applicable) a latent survival time $T_i \sim \text{Exp}(\lambda)$. The dependence structure is constructed in two steps.

First, the association between efficacy and survival is induced via a Gaussian copula with parameter $\rho_c$, which controls the strength of the efficacy--survival correlation: $\rho_c = 0$ yields independence, while $\rho_c > 0$ generates positive dependence in which responders tend to have longer survival times. For binary--continuous pairs, the attainable range of Pearson correlation is strictly narrower than $[-1, 1]$ \citep{demirtas2011correlation}, so the realized Pearson correlation $\rho_{TX} = \Corr(T_i, X_i)$ is generally smaller than the copula parameter $\rho_c$. In our simulations, the realized $\rho_{TX}$ is estimated directly from the generated data.

Second, the binary safety endpoint is generated conditional on efficacy, using the joint probabilities from Equation~\eqref{eq:pi_all} to obtain $P(Y=1 \mid X=x)$ for $x \in \{0, 1\}$. This preserves both the specified marginal rates $(p, q)$ and the target correlation $\phi$ exactly.

For TTE simulations, patients enroll uniformly over $[0, t_{\text{entry}}]$ with administrative censoring at calendar time $t_{\text{admin}}$, yielding patient-specific follow-up $C_i = t_{\text{admin}} - E_i$ where $E_i$ is the enrollment time. The observed time is $V_i = \min(T_i, C_i)$ with event indicator $\delta_i = \mathbf{1}\{T_i \leq C_i\}$.

\subsubsection{Simulation Scenarios}

For each  scenario, $10^6$ replications were simulated to ensure that Monte Carlo standard errors were negligible relative to the quantities of interest (e.g., around 0.00016--0.0005 for all estimated proportions, such as Type~I error rates and PCS). 

For the sample size calculation validation of response rate (Table~\ref{tab:combined_utility_rose}), we evaluated 48 parameter combinations with key design parameters included: response rates $p \in \{0.3, 0.5\}$, no-AE rates $q \in \{0.5, 0.7\}$, efficacy margins $\delta \in \{0.10, 0.15\}$, safety margin $d = 0.15$, efficacy-safety correlations $\phi \in \{-0.2, 0.0, 0.2\}$, and target PCS levels $\alpha_L = \alpha_H \in \{0.7, 0.8\}$. Utility scores were derived using the margin-based approach of Equation~\eqref{eq:utility_formula}, yielding $u = (1, 0.6, 0.4, 0)$ for $\delta = 0.10$ and $u = (1, 0.5, 0.5, 0)$ for $\delta = 0.15$.

For the validation of bias and Type~I error estimation (Tables~\ref{tab:bias_results}--\ref{tab:type1_results}), we simulated the dose optimization stage under the null hypothesis ($p_L = p_H = p$, $q_L = q_H = 0.8$) with fixed total sample size $n_1 + n_2 = 200$ while varying the Stage~1 sample size $n_1 \in \{40, 60, 80, 100\}$. Response rates $p \in \{0.3, 0.4, 0.5\}$ and efficacy--safety correlations $\phi \in \{0.0, -0.3\}$ were examined, yielding 24 unique parameter combinations. The use of $\phi=-0.3$ is to represent the most clinically likely situation in which better efficacy is associated with worse safety. Type~I error was evaluated using both one-sided Z-tests and exact binomial tests at nominal $\alpha = 0.025$. 

For the TTE confirmatory endpoint analysis (Table~\ref{tab:tte_type1}), the same 24 design configurations were simulated at each of three efficacy--survival correlation levels $\rho_c \in \{0, 0.3, 0.7\}$, where $\rho_c$ is the Gaussian copula parameter. The resulting Pearson correlations between response and survival were approximately $\hat{\rho}_{TX} \approx 0$, $0.22$, and $0.53$ for the three settings, respectively. Survival times followed an exponential distribution with rate $\lambda_0 = 0.1$, subject to uniform accrual over 52 weeks and administrative censoring at 76 weeks. Landmark survival was evaluated at $\tau = 24$ weeks. Type~I error was assessed at one-sided $\alpha = 0.025$ for all TTE tests.

In addition to the three tests for which analytical Type I error expressions are derived (landmark survival Z-test, one-sample exponential test, and two-sample Wald test), we also include the two-sample Cox score test based on the partial likelihood score statistic, which more closely reflects the test statistic used in practice for the confirmatory Cox model. The Cox score test shares the same plugin estimator as the Wald test statistic (Equation~\eqref{eq:type1_cox_main}), since both arise from the same bias propagation chain.

\subsection{Sample Size and PCS Validation}

Table~\ref{tab:combined_utility_rose} presents sample sizes and empirical PCS from both the approximate normal method (Equation~\eqref{eq:sample_size}) and the exact Multinomial method (Equations~\eqref{eq:exact_pcs_L}--\eqref{eq:exact_sample_size}), alongside ROSE design results for comparison.

Both methods consistently achieved the target PCS levels across all 48 scenarios. For the approximate method, empirical PCS values ranged from 0.700 to 0.718 when targeting 0.70, and from 0.800 to 0.806 when targeting 0.80. The exact method produced slightly higher sample sizes and larger PCS values as compared with the approximate method.

Larger efficacy margins ($\delta = 0.15$ vs. $0.10$) substantially reduced the required sample sizes. For example, with $p = 0.3$, $q = 0.5$, $\phi = 0$, and PCS = 0.8, the exact method yielded $n=46$ for $\delta = 0.10$ versus $n = 33$ for $\delta = 0.15$. Positive correlation between efficacy and safety increased sample size, because when good outcomes (response and no-AE) cluster together, utility scores have less discriminating power between doses, requiring more patients to reliably differentiate them. For example, comparing $\phi = -0.2$ to $\phi = 0.2$ for $p = 0.3$, $\delta=0.10$, PCS = 0.8: exact $n$ increased from 38 to 54. In contrast, the no--AE rate $q$ had a modest effect on sample sizes, with the maximum occurring at $q=0.5$ where the Bernoulli variance is maximized. This is because the utility difference is determined by the margins $\delta$ and $d$, and $q$ influences only the variance of the utility distribution through $q(1-q)$ terms.

The utility-based design required fewer patients than the ROSE design, which relies solely on efficacy. For instance, with $p = 0.3$, $q=0.5$, $\delta = 0.10$, $\alpha = 0.8$: the utility exact method required $n = 38$--$54$ per arm depending on $\phi$, while ROSE exact method required $n = 122$ per arm---a reduction of 56--69\%. This efficiency arises because the composite utility score has lower variance than the response rate alone when safety information contributes meaningfully to dose selection.

\subsection{Bias Validation}

Table~\ref{tab:bias_results} compares observed selection-induced bias from simulations against the analytical estimates from Equation~\eqref{eq:bias_final} (utility-based plugin estimator) and Equation~\eqref{eq:bias_max} (response-only maximum bound).

The analytical bias formulas matched simulation results closely across all 24 null hypothesis scenarios (mean absolute error $< 0.00013$, maximum absolute error $< 0.00025$). For example, with $p = 0.4$, $\phi = 0$, $n_1 = 60$: observed bias was 0.01051 versus estimated 0.01056 and maximum bound 0.01077. The maximum bound (Equation~\eqref{eq:bias_max}) consistently exceeded observed values in all 24 scenarios.

Under the null hypothesis, dose selection is driven primarily by random variation in the data rather than by the within-patient correlation between efficacy and safety. Consequently, the correlation parameter $\phi$ had negligible impact on the magnitude of bias, as expected from the theoretical framework. For instance, with $p = 0.3$, $n_1 = 60$, observed biases were 0.00978 ($\phi = 0$) and 0.00975 ($\phi = -0.3$), demonstrating nearly identical values. This confirms that the bias formula correctly captures the selection mechanism independent of the efficacy--safety correlation structure.

The fixed total sample size ($n_1 + n_2 = 200$) isolated the effect of Stage~1 proportion on the combined bias. Comparing across $n_1 \in \{40, 60, 80, 100\}$, the combined bias increased with $n_1$ due to the weighting factor $n_1/(n_1 + n_2)$ in Equation~\eqref{eq:combined_bias}. For $p = 0.4$, $\phi = 0$: combined bias increased from 0.00859 ($n_1 = 40$) to 0.01355 ($n_1 = 100$). This pattern was consistent across all response rates and correlation values.

\subsection{Type I Error Assessment}

Table~\ref{tab:type1_results} presents observed and estimated Type~I error rates for both Z-tests and exact binomial tests. Selection-induced bias resulted in Type~I error inflation above the nominal $\alpha = 0.025$ level for both test procedures. Both plugin estimators are conservative, overestimating the true Type~I error rate.

\subsubsection{Z-test Performance}

The one-sided Z-test showed a mean observed Type~I error of 0.0464 across all 24 scenarios (range: 0.0407--0.0509), corresponding to an inflation factor of $1.86\times$ relative to the nominal $\alpha = 0.025$. The plugin estimator for the Z-test predicted mean Type~I error of 0.0512, overestimating the observed rate by approximately 0.0049 on average. The plugin was conservative (Est $>$ Observed) in 23 of 24 scenarios (96\%). This conservative property is desirable for regulatory planning, as it provides a reliable upper bound for the anticipated inflation. Consistent with the bias results, Type~I error inflation increased monotonically with $n_1$ (e.g., from 0.0407 at $n_1 = 40$ to 0.0478 at $n_1 = 100$ for $p = 0.4$, $\phi = 0$) and was negligibly sensitive to $\phi$.

\subsubsection{Binomial Test Performance}

The exact binomial test exhibited substantially lower Type~I error inflation, with mean observed error of 0.0334 (range: 0.0294--0.0365), corresponding to an inflation factor of $1.34\times$. The plugin estimator predicted mean Type~I error of 0.0416, overestimating the observed rate by approximately 0.0081 on average. The plugin was conservative in all 24 scenarios (100\%). The same monotonic dependence on $n_1$ was observed (e.g., from 0.0295 at $n_1 = 40$ to 0.0344 at $n_1 = 100$ for $p = 0.4$, $\phi = 0$).

The greater conservatism of the binomial plugin estimator (overestimation of 0.0081 vs.\ 0.0049 for the Z-test) arises because the binomial test's discrete rejection region creates a gap between the nominal and effective significance levels, which the continuous plugin formula does not account for. This means the plugin substantially overestimates the actual inflation, providing an even safer bound for planning.

\subsubsection{TTE Confirmatory Endpoint Performance}

Table~\ref{tab:tte_type1} presents Type~I error rate results by the efficacy--survival correlation strength for four confirmatory test procedures applied to pooled data (Stage~1 selected arm + Stage~2) under the null hypothesis with nominal one-sided $\alpha = 0.025$: (i) a landmark survival Z-test using the binary indicator $S_i(\tau) = \mathbf{1}\{T_i > 24\text{ weeks}\}$; (ii) a one-sample exponential test comparing the selected arm's survival to a historical control; (iii) a two-sample log-rank test comparing the selected arm to a concurrent control; and (iv) a two-sample Cox score test based on the partial likelihood score statistic. As with the binary endpoints, the efficacy--safety correlation $\phi$ had negligible impact on TTE Type~I error rates across all four tests and all three $\rho_c$ levels (differences $< 0.0004$), and larger $n_1$ produced monotonically higher inflation; results below are reported for $\phi = 0$.

\paragraph{Landmark Survival Z-test:} The landmark Z-test maintained near-nominal size when response and survival were independent ($\rho_c=0$), with a mean observed error of 0.0245 ($0.98 \times$ the nominal $\alpha=0.025$). As the response--survival correlation increased, mild inflation emerged, with the error rate rising to 0.0271 ($1.08 \times$) at $\rho_c=0.3$ and 0.0309 ($1.24 \times$) at $\rho_c=0.7$.
The plugin estimator for this test was uniformly conservative across all scenarios and all three correlation levels, overestimating the true Type~I error in 100\% of the simulation cases. For example at $\rho_c = 0.7$, the mean overestimation is 0.0011; at $\rho_c = 0.3$, it is 0.0008; at $\rho_c = 0$, it is 0.0006. This combination of near-nominal performance and a conservative estimator makes the landmark survival rate an attractive endpoint when pooling data from a dose optimization study with a subsequent registration trial.

\paragraph{One-sample Exponential Test:} The one-sample exponential test was consistently the most conservative. Even under strong efficacy--survival correlation ($\rho_c=0.7$), its mean observed error rate was only 0.0296 ($1.18 \times$), and it remained at or below the nominal $\alpha$ level for $\rho_c \le 0.3$. On the other hand, the plugin estimator of the one-sample exponential test showed the highest predicted error rate with the largest overestimation margin over the observed error rate among all four TTE endpoints (mean overestimation 0.0064 at $\rho_c = 0.7$, 0.0044 at $\rho_c = 0.3$, 0.0040 at $\rho_c = 0$). These properties trace to the convexity of $g(x)=1/x$ underlying $\hat{\lambda} = 1/\bar{T}$. By Jensen's inequality, $E[1/\bar{T}] > 1/E[\bar{T}]$, where the expectation is taken over the sampling distribution of $\bar{T}$ across repeated simulation trials. Thus $\hat{\lambda}$ is positively biased for $\lambda_0$. This upward shift in $\hat{\lambda}$ translates to a positive mean shift in $Z_{\text{Exp}}$, which reduces the probability mass in the rejection tail and causes the test to run below its nominal level even in the absence of selection bias ($\rho_c = 0$). The same convexity causes the first-order delta method in the plugin formula to overestimate the true bias magnitude. Administrative censoring widens this gap by truncating the long survival times where selection effects are most pronounced, further reducing the observed bias relative to what the plugin predicts. This property of conservatism makes the one-sample exponential test a robust and safe choice from a regulatory perspective.

\paragraph{Two-sample Log-Rank Test:} At $\rho_c=0$, the two-sample log-rank test closely tracked the nominal $\alpha$ level (mean 0.0254, or $1.02\times$). However, it exhibited higher inflation than the landmark and one-sample tests, rising to 0.0321 ($1.28 \times$) at $\rho_c=0.7$. This is because the two-sample log-rank test uses the full distribution of event times rather than a dichotomized landmark or a convexity-prone hazard estimate, so it has no structural mechanism that suppresses rejection. The same sensitivity that makes it the most powerful test for detecting true hazard differences also makes it the most responsive to selection-induced shifts under the null \citep{harrington1982class}. The plug-in estimator of the two-sample log-rank test closely tracks the observed value, and the direction of approximation error depends on the efficacy--survival correlation (83\% conservative at $\rho_c = 0.7$, 8\% at $\rho_c = 0.3$, 0\% at $\rho_c = 0$).

\paragraph{Two-sample Cox Score Test:} The Cox score test, based on the partial likelihood score statistic, showed the highest inflation among all four TTE tests: mean observed Type~I error of 0.0325 ($1.30\times$) at $\rho_c = 0.7$, 0.0284 ($1.14\times$) at $\rho_c = 0.3$, and 0.0256 ($1.02\times$) at $\rho_c = 0$. These rates are slightly higher than those of the log-rank test at each correlation level, consistent with the score test's different weighting of events across time. The plugin estimator is approximately unbiased but slightly anti-conservative, with mean underestimation of 0.0001 at $\rho_c = 0.7$ (conservative in 33\% of scenarios), 0.0005 at $\rho_c = 0.3$ (0\%), and 0.0005 at $\rho_c = 0$ (0\%). In practice, the maximum underestimation is less than 0.001 across all scenarios.

\section{Discussion}
\label{sec:discussion}

This paper develops a Frequentist framework for one-stage two-sample dose optimization studies, directly addressing the need to balance efficacy and safety as outlined by the FDA's \textit{Project Optimus} initiative. Such studies typically serve as the final step before Phase~3 registration trials, where the primary objective is to identify the superior dose from two candidates with acceptable efficacy-safety profiles. The one-stage design is well-suited to this setting: it provides a transparent sample size justification, avoids the complexity of adaptive stopping rules, and yields a straightforward decision at study completion.

The three contributions of our work reinforce each other in practice: utility scores calibrated to clinical trade-offs naturally define the scenarios used for sample sizing, and the bias formulas use those same margins to bound downstream error inflation. The bias characterization, while presented in the context of a hypothetical two-stage development pathway, is intended to inform confirmatory trial planning, not to propose an adaptive multi-stage design. Extensive simulation studies validate the framework: the sample size methods reliably achieved their target PCS across 48 diverse scenarios (Table~\ref{tab:combined_utility_rose}); the analytical bias expressions demonstrated excellent concordance with empirical estimates across 24 null hypothesis scenarios, with a maximum absolute error of less than 0.00025 (Table~\ref{tab:bias_results}); and the utility-based design achieved the same selection accuracy as the efficacy-only ROSE design with a meaningful reduction in the required sample size in all scenarios. A key finding is the robustness of the bias formula to the efficacy--safety correlation $\phi$, a significant practical advantage since $\phi$ is often difficult to estimate precisely from historical data.

The practical application of our framework is straightforward. Study teams can anchor the design in clinical reality by first defining the trade-off ratio $r = \delta/d$ to generate utility scores (Equation~\eqref{eq:utility_formula}) and then using the closed-form equations to calculate the required sample size. The bias formulas (Equations~\eqref{eq:bias_final}--\eqref{eq:bias_max}) then allow for prospective planning for a confirmatory stage. Because Type~I error inflation increases monotonically with the Stage~1 proportion $n_1/(n_1 + n_2)$ in the combined analysis (Tables~\ref{tab:type1_results}--\ref{tab:tte_type1}), the design involves a fundamental trade-off: a larger $n_1$ improves PCS in the dose optimization stage but increases the potential for Type~I error inflation when Stage~1 data are pooled with the subsequent confirmatory trial. Sponsors can use the plugin estimators to quantify this trade-off prospectively and choose $n_1$ to balance selection accuracy against error-rate control.

For the confirmatory binary endpoint, the choice between the Z-test and the exact binomial test involves a trade-off between inflation magnitude and conservatism. Both plugin estimators are conservative: the Z-test overestimates by approximately 0.005 (96\% conservative) and the binomial test overestimates by approximately 0.008 (100\% conservative). The binomial test offers substantially lower Type~I error inflation ($1.34\times$ vs $1.86\times$), while the Z-test provides a tighter (less conservative) plugin bound. For regulatory submissions, the binomial test is the recommended choice: it has lower inflation, a uniformly conservative plugin estimator, and the additional overestimation provides a built-in safety margin. The Z-test remains useful when a continuous approximation is preferred or when the conservative margin of the binomial plugin is unnecessarily wide.

For TTE confirmatory endpoints, the simulation results (Tables~\ref{tab:tte_type1} and \ref{tab:tte_summary}) demonstrate that the magnitude of Type~I error inflation is governed by the strength of the correlation between binary efficacy and survival, rather than by the specific test procedure. When response and survival are independent ($\rho_c = 0$), selection bias does not propagate to TTE endpoints and all four tests maintain their nominal size. At moderate correlation ($\rho_c = 0.3$, $\hat{\rho}_{TX} \approx 0.22$), all TTE tests remain near-nominal ($0.99$--$1.14\times$). Even at strong correlation ($\rho_c = 0.7$, $\hat{\rho}_{TX} \approx 0.53$), inflation factors range from only $1.18\times$ to $1.30\times$---substantially attenuated relative to the $1.85\times$ observed for the binary Z-test. This attenuation reflects the indirect pathway through which selection bias reaches TTE endpoints: bias must propagate from the utility score through the binary efficacy endpoint and then through the response--survival correlation to the survival estimate, with each step introducing a multiplicative attenuation factor.

Among the four TTE tests, the one-arm tests (landmark Z-test and one-sample exponential) have uniformly conservative plugin estimators (100\% across all $\rho_c$ levels), providing reliable upper bounds for regulatory planning. The two-sample tests (log-rank and Cox score) share the same plugin estimator derived from the Wald test structure. This plugin estimation tracks the observed value closely, a property that follows directly from the Cox model framework: under the null hypothesis, $\mathbb{E}[\hat{\beta}] = \Bias(\hat{\beta})_{\text{comb}}$ exactly (Equation~\eqref{eq:beta_bias}), and since Fisher information under equal allocation is determined by trial design and unaffected by dose selection, Stage~1 data can exactly recover the mean shift in $Z_{\text{Cox}}$ at the end of Stage~2 via Equation~\eqref{eq:bias_z_cox}. The plugin is conservative for the log-rank test (83\% at $\rho_c = 0.7$) and modestly anti-conservative for the Cox score test (33\% at $\rho_c = 0.7$), with maximum underestimation less than 0.001 in all cases. The anti-conservatism of the Cox score test plugin arises because the score test's data-adaptive variance estimation from observed risk sets captures marginally more variability than the fixed-information approximation underlying the plugin formula. In practice, study teams can estimate $\rho_{TX}$ from historical data---for example, from landmark analyses of prior trials relating response to progression-free survival---and use the plugin formulas to prospectively bound the inflation for their specific setting. When $\hat{\rho}_{TX} \leq 0.3$, TTE confirmatory analyses face minimal inflation even without bias correction; at stronger correlation, the modest inflation ($\leq 1.30\times$) may warrant a conservative alpha adjustment in registration trials.

While the proposed framework is robust, it has limitations that highlight areas for future research. The utility score methodology is currently specified for binary efficacy and safety endpoints; extending it to handle ordinal, continuous, or time-to-event outcomes directly would broaden its applicability. The bias expressions are derived under the null hypothesis, providing a conservative estimate that may overstate inflation when a true difference exists between doses. Our TTE analysis assumes that selection bias propagates to survival exclusively through the binary efficacy endpoint, whereas in practice safety endpoints may also correlate with survival, providing an additional pathway through the utility score; modeling this joint dependence structure is a natural direction for future work. The framework is also developed for the two-dose comparison setting, and extension to three or more candidate doses would require generalizing the selection mechanism and the corresponding bias characterization. On the simulation side, our TTE scenarios assumed exponential survival with administrative censoring, but real trials may feature non-proportional hazards, informative censoring, or non-exponential baseline hazards, and the robustness of the bias propagation chain under these departures warrants further investigation. Finally, while our one-stage dose optimization study is designed for late-stage dose selection before a registration trial, a natural extension is to develop a multi-stage adaptive framework for earlier clinical phases, where greater uncertainty can be managed with interim analyses. Building on established methods \citep{bauer1994evaluation, OBrienFleming1979}, such a design could reduce the expected sample size by allowing early stopping, while preserving the target PCS and controlling the family-wise error rate.

\section*{Software Availability}
The R package \texttt{DoseOptDesign} implementing all methods described in this paper---including sample size calculation (approximate and exact), utility score specification, bias estimation, and plugin Type~I error prediction---is available at \url{https://github.com/gux9/DoseOptDesign}. 

\section*{Acknowledgment}
The authors used large language models, Claude Opus 4.6 by Anthropic, to assist with revising, rephrasing for clarity, and grammar checking. All generated content was carefully reviewed, substantially revised, and verified for scientific accuracy and integrity by the human authors, who take full responsibility for the final manuscript.

\bibliographystyle{apalike}

\clearpage
\begingroup
\centering
\small
\setlength\tabcolsep{2pt}
\setlength\LTpost{0pt}
\setlength\LTpre{0pt}

\begin{xltabular}{\textwidth}{cccccc|ccc|ccc|ccc|ccc}
\caption{Sample Size Comparison: Approximate vs Exact Methods for Utility-Based Design (with ROSE Design results for comparison)} 
\label{tab:combined_utility_rose} \\
\hline\hline
\multicolumn{6}{c|}{\textbf{Design Parameters}} & 
\multicolumn{3}{c|}{\textbf{Util.\ Approx.}} & 
\multicolumn{3}{c|}{\textbf{Util.\ Exact}} & 
\multicolumn{3}{c|}{\textbf{ROSE Approx.}} & 
\multicolumn{3}{c}{\textbf{ROSE Exact}} \\ 
\cline{1-6}\cline{7-9}\cline{10-12}\cline{13-15}\cline{16-18}
$\alpha_L/\alpha_H$ & $p$ & $q$ & $\delta$ & $d$ & $\phi$ & 
$n$ & PCS$_L$ & PCS$_H$ & 
$n$ & PCS$_L$ & PCS$_H$ & 
$n$ & PCS$_L$ & PCS$_H$ & 
$n$ & PCS$_L$ & PCS$_H$ \\
\hline
\endfirsthead
\caption[]{Sample Size Comparison -- continued} \\
\hline\hline
\multicolumn{6}{c|}{\textbf{Design Parameters}} & 
\multicolumn{3}{c|}{\textbf{Util.\ Approx.}} & 
\multicolumn{3}{c|}{\textbf{Util.\ Exact}} & 
\multicolumn{3}{c|}{\textbf{ROSE Approx.}} & 
\multicolumn{3}{c}{\textbf{ROSE Exact}} \\ 
\cline{1-6}\cline{7-9}\cline{10-12}\cline{13-15}\cline{16-18}
$\alpha_L/\alpha_H$ & $p$ & $q$ & $\delta$ & $d$ & $\phi$ & 
$n$ & PCS$_L$ & PCS$_H$ & 
$n$ & PCS$_L$ & PCS$_H$ & 
$n$ & PCS$_L$ & PCS$_H$ & 
$n$ & PCS$_L$ & PCS$_H$ \\
\hline
\endhead

0.7 & 0.3 & 0.5 & 0.10 & 0.15 & -0.2 & 14 & 0.705 & 0.700 & 17 & 0.736 & 0.705 & 44 & 0.702 & 0.700 & 47 & 0.713 & 0.702 \\
0.7 & 0.3 & 0.5 & 0.10 & 0.15 & 0.0 & 17 & 0.702 & 0.700 & 20 & 0.729 & 0.703 & 44 & 0.702 & 0.700 & 47 & 0.713 & 0.702 \\
0.7 & 0.3 & 0.5 & 0.10 & 0.15 & 0.2 & 21 & 0.708 & 0.700 & 23 & 0.724 & 0.702 & 44 & 0.702 & 0.700 & 47 & 0.713 & 0.702 \\
0.7 & 0.3 & 0.7 & 0.10 & 0.15 & -0.2 & 14 & 0.711 & 0.700 & 16 & 0.732 & 0.703 & 44 & 0.702 & 0.700 & 47 & 0.713 & 0.702 \\
0.7 & 0.3 & 0.7 & 0.10 & 0.15 & 0.0 & 17 & 0.708 & 0.700 & 19 & 0.726 & 0.703 & 44 & 0.702 & 0.700 & 47 & 0.713 & 0.702 \\
0.7 & 0.3 & 0.7 & 0.10 & 0.15 & 0.2 & 20 & 0.707 & 0.700 & 22 & 0.721 & 0.703 & 44 & 0.702 & 0.700 & 47 & 0.713 & 0.702 \\
\hline
0.7 & 0.3 & 0.5 & 0.15 & 0.15 & -0.2 & 9 & 0.710 & 0.700 & 15 & 0.800 & 0.709 & 19 & 0.705 & 0.700 & 19 & 0.703 & 0.704 \\
0.7 & 0.3 & 0.5 & 0.15 & 0.15 & 0.0 & 11 & 0.707 & 0.700 & 17 & 0.783 & 0.707 & 19 & 0.705 & 0.700 & 19 & 0.703 & 0.704 \\
0.7 & 0.3 & 0.5 & 0.15 & 0.15 & 0.2 & 13 & 0.704 & 0.700 & 19 & 0.771 & 0.705 & 19 & 0.705 & 0.700 & 19 & 0.703 & 0.704 \\
0.7 & 0.3 & 0.7 & 0.15 & 0.15 & -0.2 & 8 & 0.700 & 0.700 & 14 & 0.797 & 0.709 & 19 & 0.705 & 0.700 & 19 & 0.703 & 0.704 \\
0.7 & 0.3 & 0.7 & 0.15 & 0.15 & 0.0 & 10 & 0.700 & 0.700 & 16 & 0.781 & 0.708 & 19 & 0.705 & 0.700 & 19 & 0.703 & 0.704 \\
0.7 & 0.3 & 0.7 & 0.15 & 0.15 & 0.2 & 12 & 0.701 & 0.700 & 18 & 0.769 & 0.707 & 19 & 0.705 & 0.700 & 19 & 0.703 & 0.704 \\
\hline
0.7 & 0.5 & 0.5 & 0.10 & 0.15 & -0.2 & 16 & 0.700 & 0.700 & 20 & 0.737 & 0.705 & 55 & 0.702 & 0.700 & 65 & 0.730 & 0.703 \\
0.7 & 0.5 & 0.5 & 0.10 & 0.15 & 0.0 & 20 & 0.704 & 0.700 & 23 & 0.729 & 0.701 & 55 & 0.702 & 0.700 & 65 & 0.730 & 0.703 \\
0.7 & 0.5 & 0.5 & 0.10 & 0.15 & 0.2 & 24 & 0.706 & 0.700 & 27 & 0.726 & 0.702 & 55 & 0.702 & 0.700 & 65 & 0.730 & 0.703 \\
0.7 & 0.5 & 0.7 & 0.10 & 0.15 & -0.2 & 16 & 0.705 & 0.700 & 19 & 0.734 & 0.703 & 55 & 0.702 & 0.700 & 65 & 0.730 & 0.703 \\
0.7 & 0.5 & 0.7 & 0.10 & 0.15 & 0.0 & 19 & 0.700 & 0.700 & 22 & 0.726 & 0.701 & 55 & 0.702 & 0.700 & 65 & 0.730 & 0.703 \\
0.7 & 0.5 & 0.7 & 0.10 & 0.15 & 0.2 & 23 & 0.705 & 0.700 & 26 & 0.724 & 0.704 & 55 & 0.702 & 0.700 & 65 & 0.730 & 0.703 \\
\hline
0.7 & 0.5 & 0.5 & 0.15 & 0.15 & -0.2 & 10 & 0.708 & 0.700 & 16 & 0.794 & 0.705 & 24 & 0.701 & 0.700 & 30 & 0.741 & 0.703 \\
0.7 & 0.5 & 0.5 & 0.15 & 0.15 & 0.0 & 12 & 0.701 & 0.700 & 18 & 0.777 & 0.701 & 24 & 0.701 & 0.700 & 30 & 0.741 & 0.703 \\
0.7 & 0.5 & 0.5 & 0.15 & 0.15 & 0.2 & 15 & 0.708 & 0.700 & 21 & 0.769 & 0.704 & 24 & 0.701 & 0.700 & 30 & 0.741 & 0.703 \\
0.7 & 0.5 & 0.7 & 0.15 & 0.15 & -0.2 & 10 & 0.718 & 0.700 & 15 & 0.792 & 0.704 & 24 & 0.701 & 0.700 & 30 & 0.741 & 0.703 \\
0.7 & 0.5 & 0.7 & 0.15 & 0.15 & 0.0 & 12 & 0.710 & 0.700 & 17 & 0.775 & 0.701 & 24 & 0.701 & 0.700 & 30 & 0.741 & 0.703 \\
0.7 & 0.5 & 0.7 & 0.15 & 0.15 & 0.2 & 14 & 0.705 & 0.700 & 20 & 0.767 & 0.706 & 24 & 0.701 & 0.700 & 30 & 0.741 & 0.703 \\
\hline
0.8 & 0.3 & 0.5 & 0.10 & 0.15 & -0.2 & 36 & 0.806 & 0.800 & 38 & 0.815 & 0.803 & 112 & 0.800 & 0.800 & 122 & 0.818 & 0.802 \\
0.8 & 0.3 & 0.5 & 0.10 & 0.15 & 0.0 & 44 & 0.803 & 0.800 & 46 & 0.812 & 0.802 & 112 & 0.800 & 0.800 & 122 & 0.818 & 0.802 \\
0.8 & 0.3 & 0.5 & 0.10 & 0.15 & 0.2 & 52 & 0.802 & 0.800 & 54 & 0.809 & 0.801 & 112 & 0.800 & 0.800 & 122 & 0.818 & 0.802 \\
0.8 & 0.3 & 0.7 & 0.10 & 0.15 & -0.2 & 34 & 0.801 & 0.800 & 36 & 0.811 & 0.803 & 112 & 0.800 & 0.800 & 122 & 0.818 & 0.802 \\
0.8 & 0.3 & 0.7 & 0.10 & 0.15 & 0.0 & 42 & 0.801 & 0.800 & 43 & 0.806 & 0.801 & 112 & 0.800 & 0.800 & 122 & 0.818 & 0.802 \\
0.8 & 0.3 & 0.7 & 0.10 & 0.15 & 0.2 & 50 & 0.802 & 0.800 & 51 & 0.804 & 0.802 & 112 & 0.800 & 0.800 & 122 & 0.818 & 0.802 \\
\hline
0.8 & 0.3 & 0.5 & 0.15 & 0.15 & -0.2 & 22 & 0.801 & 0.800 & 28 & 0.853 & 0.803 & 48 & 0.802 & 0.800 & 54 & 0.828 & 0.801 \\
0.8 & 0.3 & 0.5 & 0.15 & 0.15 & 0.0 & 28 & 0.806 & 0.800 & 33 & 0.842 & 0.801 & 48 & 0.802 & 0.800 & 54 & 0.828 & 0.801 \\
0.8 & 0.3 & 0.5 & 0.15 & 0.15 & 0.2 & 33 & 0.802 & 0.800 & 38 & 0.834 & 0.800 & 48 & 0.802 & 0.800 & 54 & 0.828 & 0.801 \\
0.8 & 0.3 & 0.7 & 0.15 & 0.15 & -0.2 & 21 & 0.804 & 0.800 & 26 & 0.849 & 0.804 & 48 & 0.802 & 0.800 & 54 & 0.828 & 0.801 \\
0.8 & 0.3 & 0.7 & 0.15 & 0.15 & 0.0 & 26 & 0.803 & 0.800 & 31 & 0.838 & 0.804 & 48 & 0.802 & 0.800 & 54 & 0.828 & 0.801 \\
0.8 & 0.3 & 0.7 & 0.15 & 0.15 & 0.2 & 31 & 0.801 & 0.800 & 35 & 0.828 & 0.801 & 48 & 0.802 & 0.800 & 54 & 0.828 & 0.801 \\
\hline
0.8 & 0.5 & 0.5 & 0.10 & 0.15 & -0.2 & 42 & 0.805 & 0.800 & 45 & 0.819 & 0.802 & 141 & 0.801 & 0.800 & 147 & 0.809 & 0.802 \\
0.8 & 0.5 & 0.5 & 0.10 & 0.15 & 0.0 & 51 & 0.803 & 0.800 & 54 & 0.815 & 0.801 & 141 & 0.801 & 0.800 & 147 & 0.809 & 0.802 \\
0.8 & 0.5 & 0.5 & 0.10 & 0.15 & 0.2 & 60 & 0.801 & 0.800 & 64 & 0.814 & 0.802 & 141 & 0.801 & 0.800 & 147 & 0.809 & 0.802 \\
0.8 & 0.5 & 0.7 & 0.10 & 0.15 & -0.2 & 41 & 0.805 & 0.800 & 43 & 0.815 & 0.801 & 141 & 0.801 & 0.800 & 147 & 0.809 & 0.802 \\
0.8 & 0.5 & 0.7 & 0.10 & 0.15 & 0.0 & 49 & 0.801 & 0.800 & 52 & 0.812 & 0.802 & 141 & 0.801 & 0.800 & 147 & 0.809 & 0.802 \\
0.8 & 0.5 & 0.7 & 0.10 & 0.15 & 0.2 & 58 & 0.801 & 0.800 & 60 & 0.809 & 0.800 & 141 & 0.801 & 0.800 & 147 & 0.809 & 0.802 \\
\hline
0.8 & 0.5 & 0.5 & 0.15 & 0.15 & -0.2 & 25 & 0.804 & 0.800 & 31 & 0.852 & 0.801 & 62 & 0.802 & 0.800 & 69 & 0.825 & 0.801 \\
0.8 & 0.5 & 0.5 & 0.15 & 0.15 & 0.0 & 31 & 0.802 & 0.800 & 38 & 0.845 & 0.803 & 62 & 0.802 & 0.800 & 69 & 0.825 & 0.801 \\
0.8 & 0.5 & 0.5 & 0.15 & 0.15 & 0.2 & 37 & 0.800 & 0.800 & 44 & 0.838 & 0.802 & 62 & 0.802 & 0.800 & 69 & 0.825 & 0.801 \\
0.8 & 0.5 & 0.7 & 0.15 & 0.15 & -0.2 & 24 & 0.806 & 0.800 & 29 & 0.848 & 0.801 & 62 & 0.802 & 0.800 & 69 & 0.825 & 0.801 \\
0.8 & 0.5 & 0.7 & 0.15 & 0.15 & 0.0 & 30 & 0.806 & 0.800 & 35 & 0.840 & 0.802 & 62 & 0.802 & 0.800 & 69 & 0.825 & 0.801 \\
0.8 & 0.5 & 0.7 & 0.15 & 0.15 & 0.2 & 36 & 0.806 & 0.800 & 41 & 0.834 & 0.803 & 62 & 0.802 & 0.800 & 69 & 0.825 & 0.801 \\

\hline\hline
\end{xltabular}
\endgroup

\vspace{6pt}
\noindent\textbf{Table Notes:}
\textbf{Common definitions}: ($p$, $q$): response rate and no-AE rate for Dose $L$ in Scenario $\mathcal{S}_L$ and Dose $H$ in Scenario $\mathcal{S}_H$ (see Table~\ref{tab:selection_criteria01}); $\delta$ = efficacy margin; 
PCS$_L$ = probability of correct selection when doses equal (Scenario L); 
PCS$_H$ = probability of correct selection when high dose superior (Scenario H); $\alpha_L = \alpha_H = 0.7$ or $0.8$.
\textbf{Utility-Based Design}: $d$ = safety margin, $\phi$ = correlation, $r = \delta/d$. 
Utility scores: $u = (1, 1/(1+r), r/(1+r), 0)$.
\textbf{ROSE Design}: No explicit safety parameters ($\phi$, $q$, $d$ not applicable). 
Utility scores: $u = (1, 1, 0, 0)$. ROSE results are shown in last six columns for comparable $(p, \delta)$ settings.

\clearpage

\begin{table}[htbp]
\centering
\small
\begin{threeparttable}
\caption{Selection-Induced Bias Under Null Hypothesis}
\label{tab:bias_results}
\setlength\tabcolsep{4pt}
\begin{tabular}{cccc|ccc|c}
\hline\hline
\multicolumn{4}{c|}{\textbf{Design Parameters}} &
\multicolumn{3}{c|}{\textbf{Combined Bias}} &
\\
\hline
$p$ & $\phi$ & $n_1$ & $n_1+n_2$ &
Observed & Est & Est$_{\text{max}}$ & $\hat{\phi}$ \\
\hline
0.3 & 0.0  &  40 & 200 & 0.00803 & 0.00820 & 0.00837 & 0.000 \\
0.3 & 0.0  &  60 & 200 & 0.00978 & 0.01001 & 0.01023 & 0.000 \\
0.3 & 0.0  &  80 & 200 & 0.01132 & 0.01154 & 0.01179 & 0.000 \\
0.3 & 0.0  & 100 & 200 & 0.01265 & 0.01288 & 0.01316 & 0.000 \\
0.4 & 0.0  &  40 & 200 & 0.00859 & 0.00862 & 0.00878 & 0.000 \\
0.4 & 0.0  &  60 & 200 & 0.01051 & 0.01056 & 0.01077 & 0.000 \\
0.4 & 0.0  &  80 & 200 & 0.01212 & 0.01220 & 0.01244 & 0.000 \\
0.4 & 0.0  & 100 & 200 & 0.01355 & 0.01364 & 0.01390 & 0.000 \\
0.5 & 0.0  &  40 & 200 & 0.00875 & 0.00864 & 0.00881 & 0.000 \\
0.5 & 0.0  &  60 & 200 & 0.01072 & 0.01063 & 0.01083 & 0.000 \\
0.5 & 0.0  &  80 & 200 & 0.01238 & 0.01230 & 0.01254 & 0.000 \\
0.5 & 0.0  & 100 & 200 & 0.01383 & 0.01377 & 0.01403 & 0.000 \\
\hline
0.3 & $-$0.3 &  40 & 200 & 0.00800 & 0.00819 & 0.00837 & $-$0.300 \\
0.3 & $-$0.3 &  60 & 200 & 0.00975 & 0.01000 & 0.01023 & $-$0.300 \\
0.3 & $-$0.3 &  80 & 200 & 0.01129 & 0.01152 & 0.01179 & $-$0.300 \\
0.3 & $-$0.3 & 100 & 200 & 0.01264 & 0.01287 & 0.01316 & $-$0.300 \\
0.4 & $-$0.3 &  40 & 200 & 0.00854 & 0.00861 & 0.00878 & $-$0.299 \\
0.4 & $-$0.3 &  60 & 200 & 0.01044 & 0.01055 & 0.01077 & $-$0.298 \\
0.4 & $-$0.3 &  80 & 200 & 0.01210 & 0.01219 & 0.01244 & $-$0.298 \\
0.4 & $-$0.3 & 100 & 200 & 0.01355 & 0.01362 & 0.01390 & $-$0.298 \\
0.5 & $-$0.3 &  40 & 200 & 0.00873 & 0.00863 & 0.00881 & $-$0.297 \\
0.5 & $-$0.3 &  60 & 200 & 0.01070 & 0.01062 & 0.01083 & $-$0.297 \\
0.5 & $-$0.3 &  80 & 200 & 0.01236 & 0.01229 & 0.01254 & $-$0.297 \\
0.5 & $-$0.3 & 100 & 200 & 0.01386 & 0.01375 & 0.01403 & $-$0.296 \\
\hline\hline
\end{tabular}
\begin{tablenotes}[flushleft]
\small
\item \textbf{Simulation Setting}: Null hypothesis $p_L = p_H = p$, $q_L = q_H = 0.8$, utility weights $(u_1,u_2,u_3,u_4)=(1,0.8,0.2,0)$, $n_1 + n_2 = 200$.
Results averaged over $10^6$ replications.
\item \textbf{Columns}: Observed = empirical bias from simulations;
Est = plugin estimate using Equation~\eqref{eq:bias_final};
Est$_{\text{max}}$ = conservative upper bound using Equation~\eqref{eq:bias_max};
$\hat{\phi}$ = estimated efficacy--safety correlation from Stage~1 data.
\end{tablenotes}
\end{threeparttable}
\end{table}

\clearpage

\begin{table}[htbp]
\centering
\small
\begin{threeparttable}
\caption{Type I Error Rate: Z-test vs Binomial Test Under Null Hypothesis}
\label{tab:type1_results}
\setlength\tabcolsep{3pt}
\begin{tabular}{cccc|ccc|ccc}
\hline\hline
\multicolumn{4}{c|}{\textbf{Design Parameters}} &
\multicolumn{3}{c|}{\textbf{Z-test}} &
\multicolumn{3}{c}{\textbf{Binomial Test}} \\
\hline
$p$ & $\phi$ & $n_1$ & $n_1+n_2$ &
Observed & Est & Est$_{\text{max}}$ &
Observed & Est & Est$_{\text{max}}$ \\
\hline
0.3 & 0.0  &  40 & 200 & 0.0439 & 0.0439 & 0.0444 & 0.0314 & 0.0362 & 0.0366 \\
0.3 & 0.0  &  60 & 200 & 0.0467 & 0.0494 & 0.0501 & 0.0333 & 0.0409 & 0.0415 \\
0.3 & 0.0  &  80 & 200 & 0.0490 & 0.0544 & 0.0552 & 0.0350 & 0.0453 & 0.0460 \\
0.3 & 0.0  & 100 & 200 & 0.0509 & 0.0591 & 0.0601 & 0.0365 & 0.0494 & 0.0503 \\
0.4 & 0.0  &  40 & 200 & 0.0407 & 0.0435 & 0.0440 & 0.0295 & 0.0337 & 0.0340 \\
0.4 & 0.0  &  60 & 200 & 0.0436 & 0.0490 & 0.0496 & 0.0314 & 0.0381 & 0.0386 \\
0.4 & 0.0  &  80 & 200 & 0.0455 & 0.0539 & 0.0547 & 0.0329 & 0.0422 & 0.0428 \\
0.4 & 0.0  & 100 & 200 & 0.0478 & 0.0586 & 0.0596 & 0.0344 & 0.0461 & 0.0468 \\
0.5 & 0.0  &  40 & 200 & 0.0431 & 0.0431 & 0.0436 & 0.0314 & 0.0352 & 0.0356 \\
0.5 & 0.0  &  60 & 200 & 0.0460 & 0.0485 & 0.0491 & 0.0335 & 0.0398 & 0.0403 \\
0.5 & 0.0  &  80 & 200 & 0.0486 & 0.0535 & 0.0542 & 0.0353 & 0.0440 & 0.0447 \\
0.5 & 0.0  & 100 & 200 & 0.0503 & 0.0581 & 0.0590 & 0.0363 & 0.0480 & 0.0488 \\
\hline
0.3 & $-$0.3 &  40 & 200 & 0.0436 & 0.0439 & 0.0444 & 0.0311 & 0.0362 & 0.0366 \\
0.3 & $-$0.3 &  60 & 200 & 0.0470 & 0.0493 & 0.0501 & 0.0336 & 0.0409 & 0.0415 \\
0.3 & $-$0.3 &  80 & 200 & 0.0492 & 0.0543 & 0.0552 & 0.0352 & 0.0452 & 0.0460 \\
0.3 & $-$0.3 & 100 & 200 & 0.0509 & 0.0590 & 0.0601 & 0.0364 & 0.0493 & 0.0503 \\
0.4 & $-$0.3 &  40 & 200 & 0.0407 & 0.0435 & 0.0440 & 0.0294 & 0.0337 & 0.0340 \\
0.4 & $-$0.3 &  60 & 200 & 0.0434 & 0.0489 & 0.0496 & 0.0315 & 0.0381 & 0.0386 \\
0.4 & $-$0.3 &  80 & 200 & 0.0458 & 0.0539 & 0.0547 & 0.0330 & 0.0422 & 0.0428 \\
0.4 & $-$0.3 & 100 & 200 & 0.0475 & 0.0586 & 0.0596 & 0.0341 & 0.0460 & 0.0468 \\
0.5 & $-$0.3 &  40 & 200 & 0.0434 & 0.0431 & 0.0436 & 0.0315 & 0.0352 & 0.0356 \\
0.5 & $-$0.3 &  60 & 200 & 0.0464 & 0.0485 & 0.0491 & 0.0338 & 0.0398 & 0.0403 \\
0.5 & $-$0.3 &  80 & 200 & 0.0486 & 0.0534 & 0.0542 & 0.0352 & 0.0440 & 0.0447 \\
0.5 & $-$0.3 & 100 & 200 & 0.0504 & 0.0581 & 0.0590 & 0.0364 & 0.0480 & 0.0488 \\
\hline\hline
\end{tabular}
\begin{tablenotes}[flushleft]
\small
\item \textbf{Simulation Setting}: Same as Table~\ref{tab:bias_results}.
Nominal one-sided $\alpha = 0.025$.
\item \textbf{Columns}: Observed = empirical Type~I error from simulations;
Est = plugin estimate using Equation~\eqref{eq:type_one_error_z} or \eqref{eq:type_one_error_bin} with bias from Equation~\eqref{eq:bias_final};
Est$_{\text{max}}$ = conservative estimate using maximum bias bound (Equation~\eqref{eq:bias_max}).
\end{tablenotes}
\end{threeparttable}
\end{table}

\clearpage

\begin{table}[htbp]
\centering
\small
\begin{threeparttable}
\caption{Type I Error Rate for TTE Confirmatory Endpoints by Efficacy--Survival Correlation}
\label{tab:tte_type1}
\setlength\tabcolsep{2.5pt}
\begin{tabular}{cccc|cc|cc|cc|cc}
\hline\hline
\multicolumn{4}{c|}{\textbf{Design Parameters}} &
\multicolumn{2}{c|}{\textbf{Landmark Z}} &
\multicolumn{2}{c|}{\textbf{One-sample Exp}} &
\multicolumn{2}{c|}{\textbf{Two-sample LR}} &
\multicolumn{2}{c}{\textbf{Cox Score}} \\
\hline
$p$ & $\rho_c$ & $n_1$ & $n_1\!+\!n_2$ &
Obs & Est &
Obs & Est &
Obs & Est &
Obs & Est \\
\hline
0.3 & 0.7 &  40 & 200 & 0.0301 & 0.0311 & 0.0281 & 0.0335 & 0.0307 & 0.0308 & 0.0312 & 0.0308 \\
0.3 & 0.7 &  60 & 200 & 0.0309 & 0.0326 & 0.0293 & 0.0357 & 0.0314 & 0.0322 & 0.0320 & 0.0322 \\
0.3 & 0.7 &  80 & 200 & 0.0321 & 0.0338 & 0.0303 & 0.0376 & 0.0327 & 0.0334 & 0.0328 & 0.0334 \\
0.3 & 0.7 & 100 & 200 & 0.0329 & 0.0350 & 0.0314 & 0.0393 & 0.0339 & 0.0345 & 0.0342 & 0.0345 \\
0.4 & 0.7 &  40 & 200 & 0.0295 & 0.0303 & 0.0280 & 0.0332 & 0.0307 & 0.0306 & 0.0309 & 0.0306 \\
0.4 & 0.7 &  60 & 200 & 0.0308 & 0.0315 & 0.0293 & 0.0353 & 0.0316 & 0.0320 & 0.0319 & 0.0320 \\
0.4 & 0.7 &  80 & 200 & 0.0315 & 0.0327 & 0.0302 & 0.0372 & 0.0328 & 0.0332 & 0.0332 & 0.0332 \\
0.4 & 0.7 & 100 & 200 & 0.0322 & 0.0337 & 0.0313 & 0.0389 & 0.0335 & 0.0342 & 0.0344 & 0.0342 \\
0.5 & 0.7 &  40 & 200 & 0.0290 & 0.0295 & 0.0276 & 0.0327 & 0.0305 & 0.0302 & 0.0311 & 0.0302 \\
0.5 & 0.7 &  60 & 200 & 0.0299 & 0.0305 & 0.0288 & 0.0346 & 0.0314 & 0.0315 & 0.0317 & 0.0315 \\
0.5 & 0.7 &  80 & 200 & 0.0308 & 0.0315 & 0.0301 & 0.0364 & 0.0324 & 0.0327 & 0.0328 & 0.0327 \\
0.5 & 0.7 & 100 & 200 & 0.0311 & 0.0323 & 0.0307 & 0.0380 & 0.0335 & 0.0337 & 0.0339 & 0.0337 \\
\hline
0.3 & 0.3 &  40 & 200 & 0.0265 & 0.0273 & 0.0240 & 0.0282 & 0.0277 & 0.0272 & 0.0277 & 0.0272 \\
0.3 & 0.3 &  60 & 200 & 0.0269 & 0.0278 & 0.0246 & 0.0290 & 0.0277 & 0.0277 & 0.0281 & 0.0277 \\
0.3 & 0.3 &  80 & 200 & 0.0277 & 0.0282 & 0.0251 & 0.0296 & 0.0283 & 0.0282 & 0.0288 & 0.0282 \\
0.3 & 0.3 & 100 & 200 & 0.0277 & 0.0286 & 0.0254 & 0.0302 & 0.0289 & 0.0286 & 0.0292 & 0.0286 \\
0.4 & 0.3 &  40 & 200 & 0.0266 & 0.0272 & 0.0240 & 0.0282 & 0.0274 & 0.0272 & 0.0275 & 0.0272 \\
0.4 & 0.3 &  60 & 200 & 0.0269 & 0.0277 & 0.0249 & 0.0290 & 0.0280 & 0.0278 & 0.0283 & 0.0277 \\
0.4 & 0.3 &  80 & 200 & 0.0274 & 0.0281 & 0.0250 & 0.0296 & 0.0284 & 0.0282 & 0.0287 & 0.0282 \\
0.4 & 0.3 & 100 & 200 & 0.0274 & 0.0285 & 0.0254 & 0.0302 & 0.0287 & 0.0286 & 0.0293 & 0.0286 \\
0.5 & 0.3 &  40 & 200 & 0.0265 & 0.0271 & 0.0240 & 0.0281 & 0.0273 & 0.0272 & 0.0277 & 0.0272 \\
0.5 & 0.3 &  60 & 200 & 0.0270 & 0.0276 & 0.0245 & 0.0289 & 0.0279 & 0.0277 & 0.0283 & 0.0277 \\
0.5 & 0.3 &  80 & 200 & 0.0271 & 0.0280 & 0.0249 & 0.0295 & 0.0282 & 0.0281 & 0.0288 & 0.0281 \\
0.5 & 0.3 & 100 & 200 & 0.0276 & 0.0283 & 0.0256 & 0.0301 & 0.0286 & 0.0285 & 0.0290 & 0.0285 \\
\hline
0.3 & 0.0 &  40 & 200 & 0.0245 & 0.0251 & 0.0212 & 0.0251 & 0.0254 & 0.0250 & 0.0254 & 0.0250 \\
0.3 & 0.0 &  60 & 200 & 0.0243 & 0.0251 & 0.0212 & 0.0251 & 0.0253 & 0.0250 & 0.0256 & 0.0250 \\
0.3 & 0.0 &  80 & 200 & 0.0243 & 0.0251 & 0.0211 & 0.0251 & 0.0256 & 0.0250 & 0.0257 & 0.0250 \\
0.3 & 0.0 & 100 & 200 & 0.0247 & 0.0251 & 0.0214 & 0.0251 & 0.0254 & 0.0250 & 0.0257 & 0.0250 \\
0.4 & 0.0 &  40 & 200 & 0.0244 & 0.0251 & 0.0210 & 0.0251 & 0.0254 & 0.0250 & 0.0255 & 0.0250 \\
0.4 & 0.0 &  60 & 200 & 0.0245 & 0.0251 & 0.0212 & 0.0251 & 0.0253 & 0.0250 & 0.0256 & 0.0250 \\
0.4 & 0.0 &  80 & 200 & 0.0245 & 0.0251 & 0.0209 & 0.0251 & 0.0254 & 0.0250 & 0.0256 & 0.0250 \\
0.4 & 0.0 & 100 & 200 & 0.0245 & 0.0251 & 0.0213 & 0.0251 & 0.0254 & 0.0250 & 0.0255 & 0.0250 \\
0.5 & 0.0 &  40 & 200 & 0.0244 & 0.0251 & 0.0209 & 0.0251 & 0.0254 & 0.0250 & 0.0255 & 0.0250 \\
0.5 & 0.0 &  60 & 200 & 0.0243 & 0.0251 & 0.0210 & 0.0251 & 0.0253 & 0.0250 & 0.0255 & 0.0250 \\
0.5 & 0.0 &  80 & 200 & 0.0247 & 0.0251 & 0.0211 & 0.0251 & 0.0256 & 0.0250 & 0.0254 & 0.0250 \\
0.5 & 0.0 & 100 & 200 & 0.0245 & 0.0251 & 0.0209 & 0.0251 & 0.0254 & 0.0250 & 0.0255 & 0.0250 \\
\hline\hline
\end{tabular}
\begin{tablenotes}[flushleft]
\small
\item \textbf{Simulation Setting}: Null hypothesis $p_L = p_H = p$, $q_L = q_H = 0.8$, $\phi = 0$, exponential survival with $\lambda_0 = 0.1$, landmark at $\tau = 24$ weeks. Results averaged over $10^6$ replications. Nominal one-sided $\alpha = 0.025$. Results for $\phi = -0.3$ are virtually identical (differences $< 0.0004$), confirming negligible sensitivity to the efficacy--safety correlation.
\item \textbf{Columns}: Obs = empirical Type~I error from simulations; Est = plugin estimate. For the landmark Z-test, the plugin uses Equation~\eqref{eq:type1_landmark} with $\Cov(S(\tau), U)$ estimated from Stage~1 data. For the one-sample exponential and two-sample log-rank tests, the plugin applies the delta method chain through the log-hazard scale. The Cox score test uses the partial likelihood score statistic; the Wald test plugin (Section~\ref{subsubsec:cox_test}) serves as its approximation.
\item \textbf{$\rho_c$}: Gaussian copula parameter controlling the efficacy--survival correlation. Estimated Pearson correlations $\hat{\rho}_{TX} \approx 0.53$ ($\rho_c = 0.7$), $0.22$ ($\rho_c = 0.3$), and $0.00$ ($\rho_c = 0$).
\end{tablenotes}
\end{threeparttable}
\end{table}

\clearpage

\begin{table}[htbp]
\centering
\small
\begin{threeparttable}
\caption{Summary: Comparison of Z-test and Binomial Test Performance}
\label{tab:test_summary}
\begin{tabular}{lcc}
\hline\hline
\textbf{Metric} & \textbf{Z-test} & \textbf{Binomial Test} \\
\hline
Mean observed Type I error & 0.0464 & 0.0334 \\
Mean plugin estimate & 0.0512 & 0.0416 \\
Mean overestimation (Est $-$ Obs) & $+$0.0049 & $+$0.0081 \\
Inflation factor (vs $\alpha = 0.025$) & 1.86$\times$ & 1.34$\times$ \\
Plugin conservative rate & 96\% & 100\% \\
Estimator property & Conservative & Conservative \\
\hline\hline
\end{tabular}
\begin{tablenotes}[flushleft]
\small
\item \textbf{Note}: Summary statistics computed across 24 null hypothesis scenarios ($n_1 \in \{40, 60, 80, 100\}$).
Both plugin estimators are conservative (overestimate the true error rate),
providing reliable upper bounds for regulatory planning.
The binomial test has lower inflation ($1.34\times$ vs $1.86\times$)
and a more conservative plugin (100\% vs 96\%).
\end{tablenotes}
\end{threeparttable}
\end{table}

\begin{table}[htbp]
\centering
\small
\begin{threeparttable}
\caption{Summary: TTE Type I Error Inflation by Efficacy--Survival Correlation}
\label{tab:tte_summary}
\begin{tabular}{lcccccc}
\hline\hline
& \multicolumn{2}{c}{$\rho_c = 0.7$} & \multicolumn{2}{c}{$\rho_c = 0.3$} & \multicolumn{2}{c}{$\rho_c = 0$} \\
\cmidrule(lr){2-3} \cmidrule(lr){4-5} \cmidrule(lr){6-7}
\textbf{Test} & Mean & Factor & Mean & Factor & Mean & Factor \\
\hline
Binary Z-test (reference)  & 0.0463 & 1.85$\times$ & 0.0463 & 1.85$\times$ & 0.0464 & 1.85$\times$ \\
Binary binomial (reference)& 0.0334 & 1.34$\times$ & 0.0334 & 1.34$\times$ & 0.0335 & 1.34$\times$ \\
\hline
Landmark Z-test            & 0.0309 & 1.24$\times$ & 0.0271 & 1.08$\times$ & 0.0245 & 0.98$\times$ \\
One-sample exponential        & 0.0296 & 1.18$\times$ & 0.0248 & 0.99$\times$ & 0.0211 & 0.84$\times$ \\
Two-sample log-rank           & 0.0321 & 1.28$\times$ & 0.0281 & 1.12$\times$ & 0.0254 & 1.02$\times$ \\
Two-sample Cox score          & 0.0325 & 1.30$\times$ & 0.0284 & 1.14$\times$ & 0.0256 & 1.02$\times$ \\
\hline
$\hat{\rho}_{TX}$          & \multicolumn{2}{c}{0.53} & \multicolumn{2}{c}{0.22} & \multicolumn{2}{c}{0.00} \\
\hline\hline
\end{tabular}
\begin{tablenotes}[flushleft]
\small
\item \textbf{Note}: Mean = mean observed Type~I error across 12 scenarios ($\phi = 0$) at given $\rho_c$. Factor = ratio to nominal $\alpha = 0.025$. Binary test results are included as reference and are invariant to $\rho_c$ (small differences reflect Monte Carlo variability). The one-sample exponential test at $\rho_c = 0$ shows rates below nominal due to the log-scale test's inherent conservatism, not selection bias. The Cox score test shows slightly higher inflation than the log-rank test; both share the same plugin estimator. $\hat{\rho}_{TX}$: estimated Pearson correlation between binary efficacy and survival time.
\end{tablenotes}
\end{threeparttable}
\end{table}

\clearpage
\appendix
\section{Mathematical Proofs and Technical Derivations}
\label{app:proofs}

\paragraph{Notation.}
Throughout this appendix, $n_1$ denotes the per-arm sample size in the dose optimization stage (Stage~1) and $n_2$ the number of additional patients enrolled on the selected dose in the confirmatory stage (Stage~2). Under the null hypothesis, $p_L = p_H = p$, $q_L = q_H = q$, the utility scores for both doses share common moments $\mu = \mathbb{E}[U]$ and $\sigma_U^2 = \Var(U)$. The symbol $\phi(\cdot)$ denotes the standard normal density $(2\pi)^{-1/2}\exp(-x^2/2)$, $\Phi(\cdot)$ denotes the standard normal CDF, and $\pi \approx 3.14159$ denotes the mathematical constant (distinguished from the probability vector $\boldsymbol{\pi}$ by context).

\subsection{Truncated Normal Expectation}
\label{app:truncated}

\begin{lemma}[Truncated Selection Expectation]
\label{lem:truncated}
For $Z_H, Z_L \stackrel{\mathrm{iid}}{\sim} N(0,1)$ and any threshold $k \in \mathbb{R}$:
\begin{equation}
\mathbb{E}\bigl[Z_H \cdot \mathbf{1}\{Z_H - Z_L > k\}\bigr]
+ \mathbb{E}\bigl[Z_L \cdot \mathbf{1}\{Z_H - Z_L \leq k\}\bigr]
= \frac{1}{\sqrt{\pi}}\, e^{-k^2/4}.
\label{eq:lem_truncated}
\end{equation}
\end{lemma}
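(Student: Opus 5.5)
The plan is to decouple the selection event from the quantity being averaged with an orthogonal rotation of $(Z_H, Z_L)$. The selection event depends only on the difference, so I will work with the standardized sum and difference
\[
S = \frac{Z_H + Z_L}{\sqrt{2}}, \qquad T = \frac{Z_H - Z_L}{\sqrt{2}}.
\]
Since $(Z_H, Z_L)$ is standard bivariate normal and the map is orthogonal, $S$ and $T$ are again i.i.d.\ $N(0,1)$. Inverting gives $Z_H = (S+T)/\sqrt{2}$ and $Z_L = (S-T)/\sqrt{2}$. The event $\{Z_H - Z_L > k\}$ becomes $\{T > c\}$ with $c = k/\sqrt{2}$.

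Substituting, the left-hand side of \eqref{eq:lem_truncated} splits into an $S$-part and a $T$-part:
\[
\frac{1}{\sqrt{2}}\,\mathbb{E}\bigl[S\,(\mathbf{1}\{T>c\} + \mathbf{1}\{T\le c\})\bigr]
+ \frac{1}{\sqrt{2}}\Bigl(\mathbb{E}\bigl[T\,\mathbf{1}\{T>c\}\bigr] - \mathbb{E}\bigl[T\,\mathbf{1}\{T\le c\}\bigr]\Bigr).
\]
The two indicators sum to one, so the $S$-part is just $\mathbb{E}[S]/\sqrt{2} = 0$. This step does not even require independence of $S$ and $T$, though we have it anyway.

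For the $T$-part I will use the standard truncated-moment identity $\mathbb{E}[T\,\mathbf{1}\{T>c\}] = \int_c^\infty t\,\phi(t)\,dt = \phi(c)$. It follows from $\phi'(t) = -t\,\phi(t)$. Since $\mathbb{E}[T] = 0$, we also get $\mathbb{E}[T\,\mathbf{1}\{T\le c\}] = -\phi(c)$. The $T$-part therefore equals
\[
\frac{2\phi(c)}{\sqrt{2}} = \sqrt{2}\cdot\frac{1}{\sqrt{2\pi}}\,e^{-c^2/2} = \frac{1}{\sqrt{\pi}}\,e^{-k^2/4},
\]
using $c^2/2 = k^2/4$. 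This is the claimed value, valid for every real $k$ because nothing above depends on the sign of $k$.

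There is no real obstacle. The one point to state carefully is the rotation: the $Z_L$ term enters with the complementary indicator, and after the change of variables the $S$-components of the two terms recombine into $S$ times $1$. The only computation is the truncated first moment of the standard normal. As a sanity check, setting $k=0$ recovers $\mathbb{E}[\max(Z_H, Z_L)] = 1/\sqrt{\pi}$, matching the classical value cited in the main text.
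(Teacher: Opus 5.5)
Your proof is correct and follows the paper's argument: the same orthogonal rotation to normalized sum/difference coordinates, then the truncated first-moment identity $\mathbb{E}[T\,\mathbf{1}\{T>c\}]=\phi(c)$ with $c=k/\sqrt{2}$. The only cosmetic difference is that you merge the two sum-components into $\mathbb{E}[S]/\sqrt{2}=0$, which, as you note, does not need independence of $S$ and $T$; the paper instead eliminates each sum-component separately using that independence.
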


\begin{proof}
Define the orthogonal rotation $W = (Z_H + Z_L)/\sqrt{2}$, $V = (Z_H - Z_L)/\sqrt{2}$, so that $W, V \stackrel{\mathrm{iid}}{\sim} N(0,1)$. The inverse relations are $Z_H = (W+V)/\sqrt{2}$ and $Z_L = (W-V)/\sqrt{2}$, and the selection event becomes $\{Z_H - Z_L > k\} = \{V > k/\sqrt{2}\}$.

\medskip\noindent\textit{High-dose term.}
By independence of $W$ and $V$:
\begin{align*}
\mathbb{E}\bigl[Z_H \cdot \mathbf{1}\{V > k/\sqrt{2}\}\bigr]
&= \frac{1}{\sqrt{2}}\,
   \Bigl[\underbrace{\mathbb{E}[W]}_{=\,0}
         \cdot \Pr(V > k/\sqrt{2})
         + \mathbb{E}\bigl[V \cdot \mathbf{1}\{V > k/\sqrt{2}\}\bigr]
   \Bigr] \\
&= \frac{1}{\sqrt{2}}\,
   \mathbb{E}\bigl[V \cdot \mathbf{1}\{V > k/\sqrt{2}\}\bigr].
\end{align*}

\noindent\textit{Low-dose term.}
An identical calculation gives:
\[
\mathbb{E}\bigl[Z_L \cdot \mathbf{1}\{V \leq k/\sqrt{2}\}\bigr]
= -\frac{1}{\sqrt{2}}\,
   \mathbb{E}\bigl[V \cdot \mathbf{1}\{V \leq k/\sqrt{2}\}\bigr].
\]

\noindent\textit{Standard truncated-normal identities.}
For $V \sim N(0,1)$ and any $a \in \mathbb{R}$, direct integration yields
\begin{equation}
\mathbb{E}\bigl[V \cdot \mathbf{1}\{V > a\}\bigr]
= \int_a^\infty v\,\phi(v)\,dv
= \bigl[-\phi(v)\bigr]_a^\infty
= \phi(a),
\label{eq:trunc_identity_upper}
\end{equation}
and, by complementarity, $\mathbb{E}\bigl[V \cdot \mathbf{1}\{V \leq a\}\bigr] = -\phi(a)$.

\medskip\noindent\textit{Combining.}
Setting $a = k/\sqrt{2}$:
\begin{align*}
\text{LHS of \eqref{eq:lem_truncated}}
&= \frac{1}{\sqrt{2}}\bigl[\phi(k/\sqrt{2})
   - \bigl(-\phi(k/\sqrt{2})\bigr)\bigr]
 = \frac{2\,\phi(k/\sqrt{2})}{\sqrt{2}}
 = \sqrt{2}\,\phi\!\left(\frac{k}{\sqrt{2}}\right) \\
&= \sqrt{2} \cdot \frac{1}{\sqrt{2\pi}}
   \exp\!\left(-\frac{k^2}{4}\right)
 = \frac{1}{\sqrt{\pi}}\,e^{-k^2/4}. \qedhere
\end{align*}
\end{proof}

\subsection{Selection Bias for Binary Endpoints}
\label{app:bias_binary}

\begin{theorem}[Selection Bias Under Utility-Based Selection]
\label{thm:bias_general}
Let $X$ be a binary efficacy endpoint with $\mathbb{E}[X] = p$, and let $U = \sum_{k=1}^4 u_k W_k$ be the utility score with $\mathbb{E}[U] = \mu$ and $\Var(U) = \sigma_U^2$. Under the null hypothesis ($p_L = p_H = p$, $q_L = q_H = q$) and the selection rule ``select Dose~$H$ if $\bar{U}_H - \bar{U}_L > \lambda_u$; otherwise select Dose~$L$,'' the selection-induced bias in the response rate of the selected dose is:
\begin{equation}
\Bias(\hat{p}_{\mathrm{selected}})
= \frac{\Cov(X, U)}{\sigma_U \sqrt{n_1}}
  \cdot \frac{1}{\sqrt{\pi}}\,
  \exp\!\left(-\frac{\lambda_u^2 n_1}{4\sigma_U^2}\right),
\label{eq:bias_general_app}
\end{equation}
where $\Cov(X,U) = \sum_{k=1}^{4} x_k u_k \pi_k - p\mu$ with $x_k \in \{0,1\}$ denoting the efficacy indicator for outcome $k$. More generally, for any endpoint $W$ observed jointly with $U$ and satisfying CLT-based joint normality of $(\bar{W}_j, \bar{U}_j)$, the formula holds with $\Cov(X,U)$ replaced by $\Cov(W,U)$.
\end{theorem}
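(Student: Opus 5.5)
The plan is to reduce the bias to the expectation evaluated in Lemma~\ref{lem:truncated} by linearly projecting the estimation statistic onto the selection statistic. Throughout, the arm-level pairs $(\hat{p}_j, \bar{U}_j)$, $j\in\{H,L\}$, are treated as exactly bivariate normal, as licensed by the CLT. This is the only approximation in the argument, so the result is asymptotic even though the formula is stated with equality.

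\textbf{Decomposition.} Write $\hat{p}_{\mathrm{selected}} = \hat{p}_H \mathbf{1}\{D>\lambda_u\} + \hat{p}_L \mathbf{1}\{D\le\lambda_u\}$ with $D=\bar U_H-\bar U_L$. The two indicators sum to one, so subtracting $p$ gives
\[
\Bias = \mathbb{E}\bigl[(\hat{p}_H-p)\mathbf{1}\{D>\lambda_u\}\bigr] + \mathbb{E}\bigl[(\hat{p}_L-p)\mathbf{1}\{D\le\lambda_u\}\bigr].
\]

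\textbf{Projection.} For each arm, decompose $\hat{p}_j - p = \beta(\bar U_j-\mu) + R_j$ with $\beta = \Cov(\hat p_j,\bar U_j)/\Var(\bar U_j)$. Patients are i.i.d., so $\Cov(\hat p_j,\bar U_j)=\Cov(X,U)/n_1$ and $\Var(\bar U_j)=\sigma_U^2/n_1$, giving $\beta=\Cov(X,U)/\sigma_U^2$. Under joint normality, $R_j$ is uncorrelated with $\bar U_j$ and hence independent of it. The arms are independent, so $R_j$ is also independent of the other arm's $\bar U$. It follows that $R_j$ is independent of $D$ and has mean zero, so the $R_j$ terms drop out. (This step is the main point needing care: conditioning only on $\bar U_H$ is not enough, because the event involves $\bar U_L$ too. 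Cross-arm independence is what closes the gap.)

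\textbf{Standardization and reduction to the lemma.} Set $Z_j=\sqrt{n_1}(\bar U_j-\mu)/\sigma_U$. These are i.i.d.\ $N(0,1)$ under the null, since both arms share $\mu$ and $\sigma_U$. The event $\{D>\lambda_u\}$ is the same as $\{Z_H-Z_L>k\}$ with $k=\lambda_u\sqrt{n_1}/\sigma_U$. Hence
\[
\Bias = \frac{\Cov(X,U)}{\sigma_U\sqrt{n_1}}\,\mathbb{E}\bigl[Z_H\mathbf{1}\{Z_H-Z_L>k\}+Z_L\mathbf{1}\{Z_H-Z_L\le k\}\bigr].
\]
Lemma~\ref{lem:truncated} gives the expectation as $e^{-k^2/4}/\sqrt{\pi}$. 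Substituting $k^2/4=\lambda_u^2 n_1/(4\sigma_U^2)$ yields \eqref{eq:bias_general_app}.

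\textbf{Remaining details.} The expression for $\Cov(X,U)$ follows from $XU=\sum_k x_k u_k W_k$, so $\mathbb{E}[XU]=\sum_k x_k u_k\pi_k$. For the general endpoint $W$, nothing above used that $X$ is binary beyond i.i.d.\ sampling with finite second moments and joint normality of $(\bar W_j,\bar U_j)$. Replacing $X$ by $W$ throughout therefore gives the version with $\Cov(W,U)$.
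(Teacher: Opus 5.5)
Your proposal is correct and follows essentially the same route as the paper. Both arguments project $\hat p_j - p$ linearly onto $\bar U_j$ under CLT joint normality, apply iterated expectations, standardize to $Z_H, Z_L$, and invoke Lemma~\ref{lem:truncated}. Your residual decomposition, which makes the cross-arm independence explicit, is a slightly more careful version of the paper's step. The paper conditions only on $\bar U_H$ even though the selection event $\{D>\lambda_u\}$ also depends on $\bar U_L$.
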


\begin{proof}
The derivation is given in Section~\ref{subsec:bias_derivation} of the main text; we summarize the key steps for reference.

Under the null, the standardized utility means $Z_j = (\bar{U}_j - \mu)/(\sigma_U/\sqrt{n_1})$ are i.i.d.\ $N(0,1)$ for $j \in \{H,L\}$. By the CLT, $(\hat{p}_j, \bar{U}_j)$ are approximately jointly normal, giving the conditional expectation
\[
\mathbb{E}[\hat{p}_j - p \mid \bar{U}_j]
= \frac{\Cov(X, U)}{\sigma_U^2}\,(\bar{U}_j - \mu).
\]
The law of iterated expectations yields
\[
\Bias(\hat{p}_{\mathrm{selected}})
= \frac{\Cov(X, U)}{\sigma_U \sqrt{n_1}}\,
  \mathbb{E}\bigl[
    Z_H \cdot \mathbf{1}\{Z_H - Z_L > k\}
    + Z_L \cdot \mathbf{1}\{Z_H - Z_L \leq k\}
  \bigr],
\]
where $k = \lambda_u \sqrt{n_1}/\sigma_U$. Lemma~\ref{lem:truncated} evaluates the expectation as $(1/\sqrt{\pi})\exp(-k^2/4)$, yielding \eqref{eq:bias_general_app}.

The generalization to arbitrary $W$ follows identically: replace $X$ by $W$ throughout, noting that only the covariance $\Cov(W, U)$ and the CLT approximation for $(\bar{W}_j, \bar{U}_j)$ are needed.
\end{proof}

\begin{corollary}[Maximum Selection Bias]
\label{cor:bias_max}
When selection is based solely on efficacy ($u_1 = u_2 = 1$, $u_3 = u_4 = 0$, i.e., $U = X$) with no threshold ($\lambda_u = 0$):
\begin{equation}
\Bias_{\max}(\hat{p})
= \frac{\sqrt{p(1-p)}}{\sqrt{n_1 \pi}},
\label{eq:bias_max_app}
\end{equation}
recovering the result of \citet{bauer2010selection} with $m_1(2) = 1/\sqrt{\pi}$.
\end{corollary}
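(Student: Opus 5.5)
This corollary is a direct specialization of Theorem~\ref{thm:bias_general}, so the plan is simply to substitute the two special choices into \eqref{eq:bias_general_app} and simplify. First, I would take the efficacy-only utility $u_1=u_2=1$, $u_3=u_4=0$. Reading off Table~\ref{tab:utility01}, outcomes 1 and 2 are exactly those with $X=1$, so $U = W_1 + W_2 = X$ almost surely. Hence $\mu = \pi_1+\pi_2 = p$ by \eqref{eq:pi_all}, and $\sigma_U^2 = \Var(X) = p(1-p)$. The covariance term becomes $\Cov(X,U)=\Var(X)=p(1-p)$; equivalently, $\sum_k x_k u_k \pi_k - p\mu = \pi_1+\pi_2 - p^2 = p(1-p)$.

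Next, I would set $\lambda_u=0$, so that $k=0$ and the exponential factor $\exp(-\lambda_u^2 n_1/(4\sigma_U^2))$ equals $1$. Substituting into \eqref{eq:bias_general_app} gives
\[
\Bias = \frac{p(1-p)}{\sqrt{p(1-p)}\,\sqrt{n_1}}\cdot\frac{1}{\sqrt{\pi}} = \frac{\sqrt{p(1-p)}}{\sqrt{n_1\pi}},
\]
which is \eqref{eq:bias_max_app}. This step also uses Lemma~\ref{lem:truncated} at $k=0$, where the left side is $\mathbb{E}[\max(Z_H,Z_L)] = 1/\sqrt{\pi}$. That value is the expected maximum of two standard normals, $m_1(2)$ in the notation of \citet{bauer2010selection}, so the identification with their result is immediate: their bias is $m_1(2)\sigma/\sqrt{n}$ with $\sigma=\sqrt{p(1-p)}$.

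To justify the label ``maximum,'' I would argue the two factors separately. By Cauchy--Schwarz, $|\Cov(X,U)| \le \sqrt{\Var(X)}\,\sigma_U$, so $\Cov(X,U)/\sigma_U \le \sqrt{p(1-p)}$, with equality when $U$ is a positive affine function of $X$, as in the efficacy-only choice. Separately, $e^{-k^2/4}\le 1$ with equality iff $\lambda_u=0$. These two factors combine multiplicatively, so their product bounds the bias for any utility scores and any $\lambda_u\ge 0$ at fixed $p$ and $n_1$.

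The only delicate point is the degenerate case $p\in\{0,1\}$, where $\sigma_U=0$ and the formula in Theorem~\ref{thm:bias_general} is undefined. There the bias is trivially $0$, which agrees with \eqref{eq:bias_max_app}, so I would exclude or note this case. Otherwise the argument is routine substitution.
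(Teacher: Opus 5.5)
Your proposal is correct and matches the paper's proof, which likewise substitutes $U=X$ (so that $\Cov(X,U)/\sigma_U=\sqrt{p(1-p)}$) and $\lambda_u=0$ into Theorem~\ref{thm:bias_general}. Your Cauchy--Schwarz argument that this value is an upper bound over all utilities and thresholds, and your remark on the degenerate case $p\in\{0,1\}$, are sound additions that the paper's proof does not spell out.
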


\begin{proof}
When $U = X$: $\Cov(X,U) = \Var(X) = p(1-p)$ and $\sigma_U = \sqrt{p(1-p)}$, so $\Cov(X,U)/\sigma_U = \sqrt{p(1-p)}$. Setting $\lambda_u = 0$ in \eqref{eq:bias_general_app} gives the result.
\end{proof}

\subsection{Bias Propagation via the Delta Method}
\label{app:bias_tte_derivations}

The following proposition establishes all TTE bias formulas used in the main text. Throughout, $\mathcal{B}$ denotes the common factor:
\begin{equation}
\mathcal{B}
\;\equiv\; \Bias(\bar{T})
= \frac{\Cov(T, U)}{\sigma_U \sqrt{n_1}}
  \cdot \frac{1}{\sqrt{\pi}}\,
  \exp\!\left(-\frac{\lambda_u^2 n_1}{4\sigma_U^2}\right),
\label{eq:calB_def}
\end{equation}
which follows from Theorem~\ref{thm:bias_general} with $W = T$.

\begin{proposition}[Bias Propagation Chain]
\label{prop:bias_chain}
Under exponential survival with true hazard $\lambda_0$ (mean survial time $1/\lambda_0$) and the dose-selection mechanism of Theorem~\ref{thm:bias_general}, let $w_1 = n_1/(n_1+n_2)$ denote the Stage~1 dilution factor:
\begin{enumerate}
\item[\textup{(a)}]
\textbf{Landmark survival rate.}
For the $\tau$-month indicator $S_i(\tau) = \mathbf{1}\{T_i > \tau\}$:
\begin{equation}
\Bias\!\bigl(\hat{S}(\tau)\bigr)_{\mathrm{comb}}
= w_1 \cdot \frac{\Cov(S(\tau),\, U)}{\sigma_U \sqrt{n_1}}
  \cdot \frac{1}{\sqrt{\pi}}\,
  \exp\!\left(-\frac{\lambda_u^2 n_1}{4\sigma_U^2}\right).
\label{eq:bias_landmark_app}
\end{equation}

\item[\textup{(b)}]
\textbf{Hazard rate} (first delta method step).
With $\hat{\lambda} = 1/\bar{T}$ and $g(x) = 1/x$, $g'(1/\lambda_0) = -\lambda_0^2$:
\begin{equation}
\Bias(\hat{\lambda})_{\mathrm{comb}}
\;\approx\; -\lambda_0^2 \cdot w_1 \cdot \mathcal{B}.
\label{eq:bias_lambda_chain}
\end{equation}

\item[\textup{(c)}]
\textbf{Log hazard ratio} (second delta method step).
For the Cox coefficient $\hat{\beta} = \log(\hat{\lambda}_{\mathrm{trt}}/\hat{\lambda}_{\mathrm{ctrl}})$, where only the treatment arm is affected by selection:
\begin{equation}
\Bias(\hat{\beta})_{\mathrm{comb}}
\;\approx\; -\lambda_0 \cdot w_1 \cdot \mathcal{B}.
\label{eq:bias_beta_chain}
\end{equation}
\end{enumerate}
\end{proposition}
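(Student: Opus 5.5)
We prove the three parts in order, each time reducing to Theorem~\ref{thm:bias_general} plus the dilution argument of Equation~\eqref{eq:combined_bias}. The basic device is the generalized form of Theorem~\ref{thm:bias_general}. Under the null, for any per-patient endpoint $V$ observed jointly with $U$, the Stage~1 mean $\bar{V}_{\mathrm{selected}}$ has bias $\Cov(V,U)\,\sigma_U^{-1} n_1^{-1/2}\pi^{-1/2}\exp(-\lambda_u^2 n_1/4\sigma_U^2)$. Stage~2 patients are enrolled after selection and are independent of the Stage~1 utility data, so their mean is unbiased. The pooled estimator $w_1\bar{V}_{\mathrm{selected}} + (1-w_1)\bar{V}_2$ therefore has bias $w_1$ times the Stage~1 bias, exactly as in Equation~\eqref{eq:combined_bias}.

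\textbf{Part (a).} Take $V = S(\tau)=\mathbf{1}\{T>\tau\}$, which is a bounded binary endpoint, so the CLT joint-normality hypothesis of Theorem~\ref{thm:bias_general} holds for $(\hat S_j(\tau),\bar U_j)$. Applying the theorem and then the dilution step gives \eqref{eq:bias_landmark_app} directly. No delta method is needed, which is why (a) is stated as an equality.

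\textbf{Part (b).} Apply the same argument with $V=T$. This gives $\Bias(\bar T)=\mathcal{B}$ at Stage~1 and $w_1\mathcal{B}$ for the pooled mean $\bar T_{\mathrm{comb}}$, whose null expectation is $1/\lambda_0$. The hazard estimate is $\hat\lambda=g(\bar T_{\mathrm{comb}})$ with $g(x)=1/x$. A first-order Taylor expansion about $1/\lambda_0$ gives
\[
\hat\lambda-\lambda_0\approx g'(1/\lambda_0)\bigl(\bar T_{\mathrm{comb}}-1/\lambda_0\bigr)=-\lambda_0^2\bigl(\bar T_{\mathrm{comb}}-1/\lambda_0\bigr).
\]
Taking expectations yields \eqref{eq:bias_lambda_chain}. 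We will state explicitly that the second-order term $g''\Var(\bar T)/2=\lambda_0^3\Var(\bar T)$ is being dropped. This term is $O(1/(n_1+n_2))$ and does not involve selection, matching the Jensen discussion in the simulation section.

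\textbf{Part (c).} Write $\hat\beta=\log\hat\lambda_{\mathrm{trt}}-\log\hat\lambda_{\mathrm{ctrl}}$. The control arm is concurrently randomized and unaffected by selection, so its log-hazard has no selection bias and contributes nothing to first order. For the treatment arm, expand $h(\lambda)=\log\lambda$ about $\lambda_0$: $\log\hat\lambda_{\mathrm{trt}}-\log\lambda_0\approx(\hat\lambda_{\mathrm{trt}}-\lambda_0)/\lambda_0$. Taking expectations and substituting (b) gives $-\lambda_0^2w_1\mathcal{B}/\lambda_0=-\lambda_0w_1\mathcal{B}$. Equivalently, composing the two steps gives $(\log\circ g)'(1/\lambda_0)=-\lambda_0$, a single chain-rule check that confirms the constant.

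The main obstacle is justifying the delta-method steps under selection. The selected $\bar T$ is a mixture of $\bar T_H$ and $\bar T_L$, each centered at $1/\lambda_0$ with $O(n_1^{-1/2})$ fluctuations. The selection bias $\mathcal{B}$ is itself $O(n_1^{-1/2})$, so the Taylor remainder is $O(1/n_1)$, which is the same order as the curvature term. The statement is therefore a first-order approximation, which is why (b) and (c) carry ``$\approx$''. A second issue is that censoring makes $\hat\lambda=D/\sum V_i$ rather than $1/\bar T$. As in the statement, we will treat $\hat\lambda$ as $1/\bar T$ under the uncensored exponential model, and note that censoring affects only the variance, not the first-order bias.
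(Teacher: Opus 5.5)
Your proposal is correct and matches the paper's proof step for step: Theorem~\ref{thm:bias_general} plus Stage~2 dilution for (a), a first-order expansion of $1/x$ about $1/\lambda_0$ for (b), and an expansion of $\log\lambda$ about $\lambda_0$ with a first-order-unbiased control arm for (c). The one difference is that you pool $\bar T$ before linearizing, whereas the paper linearizes the Stage~1 estimate and then dilutes; the two agree to first order.

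One aside is wrong, although it lies outside the statement, which fixes $\hat\lambda = 1/\bar T$. Censoring does change the first-order bias: linearizing $\hat\lambda = \sum_i \delta_i / \sum_i V_i$ with $V_i = \min(T_i, C_i)$ gives a bias proportional to $\Cov(\delta - \lambda_0 V, U)/\mathbb{E}[V]$ rather than $-\lambda_0^2 \Cov(T, U)$. The paper's own simulation discussion of administrative censoring reflects this.
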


\begin{proof}
\textit{Part~(a).}
$S_i(\tau) = \mathbf{1}\{T_i > \tau\}$ is binary with the same algebraic structure as the response indicator $X$. Since Stage~2 patients are enrolled after dose selection and their outcomes are therefore independent of the selection mechanism, $\mathbb{E}[\hat{S}_{\text{stage2}}(\tau)] = S_0(\tau)$ under $H_0$, contributing zero additional bias. Equation~\eqref{eq:bias_landmark_app} follows from the combined-stage weighted average with dilution factor $w_1$.

\medskip
\textit{Part~(b).}
Under $H_0$, $\mathbb{E}[\bar{T}] = 1/\lambda_0$. The first-order Taylor expansion of $g(x) = 1/x$ around $x_0 = 1/\lambda_0$ gives:
\[
\hat{\lambda} = g(\bar{T})
\approx \lambda_0 - \lambda_0^2\bigl(\bar{T} - 1/\lambda_0\bigr).
\]
Taking expectations: $\Bias(\hat{\lambda}) \approx -\lambda_0^2 \cdot \mathcal{B}$. Since $\mathcal{B} > 0$ when $\Cov(T,U) > 0$, the hazard is biased downward---the selected dose appears to have better survival than reality. Applying the dilution factor $w_1$ yields \eqref{eq:bias_lambda_chain}.

\medskip
\textit{Part~(c).}
\textit{Log transformation.}
The Cox coefficient is $\hat{\beta} = \log\hat{\lambda}_{\mathrm{trt}} - \log\hat{\lambda}_{\mathrm{ctrl}}$. Since the control arm is unaffected by selection, $\mathbb{E}[\log\hat{\lambda}_{\mathrm{ctrl}}] = \log\lambda_0$ to first order. Applying $h(\lambda) = \log\lambda$ with $h'(\lambda_0) = 1/\lambda_0$:
\[
\Bias(\hat{\beta})_{\mathrm{comb}}
\approx \frac{\Bias(\hat{\lambda})_{\mathrm{comb}}}{\lambda_0}
= \frac{-\lambda_0^2 \cdot w_1 \cdot \mathcal{B}}{\lambda_0}
= -\lambda_0 \cdot w_1 \cdot \mathcal{B}.
\]

\end{proof}

\subsection{Type~I Error for TTE Confirmatory Tests}
\label{app:tte_type1_proofs}

All three TTE confirmatory tests share a common structure: (i)~the bias from Proposition~\ref{app:bias_tte_derivations} shifts the test statistic under $H_0$; (ii)~the shifted-normal model yields the inflated Type~I error. We state each result as a part of a single proposition and then give a unified proof.

\begin{proposition}[Type~I Error for TTE Confirmatory Tests]
\label{prop:tte_type1_unified}
Under the two-stage design with pooled data ($n_1 + n_2$ patients on the selected dose), the Type~I error for each confirmatory test at nominal one-sided level~$\alpha$ is:
\begin{enumerate}
\item[\textup{(a)}]
\textbf{Landmark survival Z-test}
($H_0\!: S(\tau) \leq S_0$; reject when $Z \geq z_{1-\alpha}$):
\begin{equation}
\operatorname{Type~I~Error}_{\mathrm{Landmark}} = 1 - \Phi\!\left(z_{1-\alpha} - \frac{\Bias(\hat{S}(\tau))_{\mathrm{comb}}} {\sqrt{S_0(1-S_0)/(n_1+n_2)}}\right),
\label{eq:type1_landmark_app}
\end{equation}
where $\Bias(\hat{S}(\tau))_{\mathrm{comb}}$ is given by Equation~\eqref{eq:bias_landmark_app}.

\item[\textup{(b)}]
\textbf{One-sample exponential test} ($H_0\!: \lambda \geq \lambda_0$; reject when $Z \leq -z_{1-\alpha}$):
\begin{equation}
\operatorname{Type~I~Error}_{\mathrm{Exp}} = \Phi\!\left(-z_{1-\alpha} - \Bias(\hat{\lambda})_{\mathrm{comb}} \cdot \sqrt{D}\right),
\label{eq:type1_exp_app}
\end{equation}
where $D$ is the expected number of events under $H_0$.

\item[\textup{(c)}]
\textbf{Two-sample Cox Wald test}
($H_0\!: \beta \geq 0$; reject when $Z \leq -z_{1-\alpha}$):
\begin{equation}
\operatorname{Type~I~Error}_{\mathrm{Cox}} = \Phi\!\left(-z_{1-\alpha} - \Bias(\hat{\beta})_{\mathrm{comb}} \cdot \sqrt{\frac{D_{\mathrm{total}}}{4}}\right),
\label{eq:type1_cox_app}
\end{equation}
where $D_{\mathrm{total}}$ is the total events across both arms and $\mathcal{I} = D_{\mathrm{total}}/4$ is the Fisher information under equal allocation.
\end{enumerate}

In all three cases, $\Cov(T,U) > 0$ implies that the Type~I error exceeds the nominal level~$\alpha$. When $\Cov(T,U) = 0$, all bias terms vanish and the formulas return the nominal~$\alpha$.
\end{proposition}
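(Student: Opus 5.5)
The plan is to prove all three parts with one shifted-normal template and then specialize it. Write each confirmatory statistic as $Z = (\hat\theta - \theta_0)/\mathrm{SE}_0$, where $\hat\theta$ is the pooled estimator and $\mathrm{SE}_0$ is its null standard error computed from design quantities: $\sqrt{S_0(1-S_0)/(n_1+n_2)}$ for the landmark test, $1/\sqrt{D}$ for $\log\hat\lambda$, and $\sqrt{4/D_{\mathrm{total}}}$ for $\hat\beta$. We treat $\mathrm{SE}_0$ as fixed, so it is not affected by selection. The template assumption is that, under $H_0$ with selection, $Z$ is approximately $N(b,1)$ with $b = \Bias(\hat\theta)_{\mathrm{comb}}/\mathrm{SE}_0$. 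The justification has two parts. First, the CLT applied to the pooled sample gives asymptotic normality with the null variance. Second, selection acts only through the Stage~1 component. To leading order it shifts the mean by the amount given in Theorem~\ref{thm:bias_general} and Proposition~\ref{prop:bias_chain}, and it leaves the variance unchanged. We justify the variance claim by noting that Stage~2 dominates as the design grows, and that the selection effect on the variance is of order $1/n_1$ relative to the null variance. Under this model, an upper-tail test has rejection probability $\Pr(Z\ge z_{1-\alpha}) = 1-\Phi(z_{1-\alpha}-b)$, and a lower-tail test has $\Pr(Z\le -z_{1-\alpha}) = \Phi(-z_{1-\alpha}-b)$.

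Part (a) follows by substituting Proposition~\ref{prop:bias_chain}(a) for the bias and the binomial null standard error. This reproduces \eqref{eq:type1_landmark_app}, in the same way as the binary Z-test result \eqref{eq:type_one_error_z}.

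For part (b), the test statistic is $Z_{\mathrm{Exp}}=(\log\hat\lambda-\log\lambda_0)\sqrt{D}$. We take its mean shift to be $\Bias(\log\hat\lambda)_{\mathrm{comb}}\sqrt D$ and apply the delta step $h=\log$ to Proposition~\ref{prop:bias_chain}(b). One point needs care. The displayed formula uses $\Bias(\hat\lambda)_{\mathrm{comb}}\sqrt D$, while the main text uses $\Bias(\log\hat\lambda)_{\mathrm{comb}}\sqrt D = \Bias(\hat\lambda)_{\mathrm{comb}}\sqrt D/\lambda_0$. I would state the result on the log scale, because that is the scale consistent with \eqref{eq:bias_z_exp}. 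I would then note that the two expressions coincide after the normalization $\lambda_0$ is absorbed into the bias term, or equivalently after reading \eqref{eq:type1_exp_app} with the log-scale bias.

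For part (c), $\hat\beta$ is asymptotically normal with variance $1/\mathcal I = 4/D_{\mathrm{total}}$ under 1:1 allocation. The control arm contributes no bias. Proposition~\ref{prop:bias_chain}(c) gives the mean, and \eqref{eq:type1_cox_app} then follows from the lower-tail form of the template.

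The final claims follow by a sign analysis. If $\Cov(T,U)>0$, then $\mathcal B>0$. Because $\Cov(S(\tau),U)$ has the same sign under the monotone response--survival association assumed in the paper, $b>0$ in the upper-tail case (a). In cases (b) and (c), $b<0$ because of the factor $-\lambda_0^2$ or $-\lambda_0$. Since $\Phi$ is strictly monotone, each Type~I error strictly exceeds $\alpha$. If $\Cov(T,U)=0$, then every bias is zero, $b=0$, and each formula reduces to $\alpha$.

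The main obstacle is justifying the shifted-normal model, specifically that selection changes only the mean and not the variance or the shape of $Z$. The conditional distribution of the selected Stage~1 mean is a mixture and is not exactly normal. I would handle this with a first-order argument. Conditional on $(\bar U_H,\bar U_L)$, the selected estimator has a $O(n_1^{-1/2})$ mean shift and a variance perturbation of order $O(1/n_1)$, and after the weight $w_1^2$ this perturbation is negligible against the pooled variance. The formulas are therefore stated as first-order approximations. The same caveat applies to the landmark-sign step when $\Cov(T,U)>0$ but $\Cov(S(\tau),U)$ is not assumed to share its sign.
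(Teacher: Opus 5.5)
Your argument follows the paper's own proof. You take each statistic's mean shift from Proposition~\ref{prop:bias_chain}, posit $Z\approx N(b,1)$, read off the upper- or lower-tail probability, and finish with a sign check. You also correctly flag that the landmark sign depends on $\Cov(S(\tau),U)$ rather than $\Cov(T,U)$, a point the paper glosses over.

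Two smaller corrections. First, in (b) your log-scale reading is exactly what the paper's proof derives, namely $\Phi(-z_{1-\alpha}+\lambda_0 w_1\mathcal{B}\sqrt{D})$. However, the two expressions do not ``coincide after absorbing $\lambda_0$.'' Since $\Bias(\hat\lambda)_{\mathrm{comb}}=-\lambda_0^2 w_1\mathcal{B}$, the displayed \eqref{eq:type1_exp_app} gives a shift multiplied by a further factor $\lambda_0$ ($=0.1$ in the simulations). The display therefore has to be read with $\Bias(\log\hat\lambda)_{\mathrm{comb}}$ in place of $\Bias(\hat\lambda)_{\mathrm{comb}}$. Second, your $\Cov(S(\tau),U)$ caveat applies to the zero case as well: $\Cov(T,U)=0$ does not by itself make the landmark bias vanish.

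The step that needs correcting is your justification that selection leaves the variance of $Z$ unchanged. Under the joint-normal approximation of Theorem~\ref{thm:bias_general}, write the Stage-1 endpoint mean of arm $j$ as $\beta(\bar U_j-\mu)+e_j$, with $e_j$ independent of both utility means. The rotation in Lemma~\ref{lem:truncated} then gives $\mathbb{E}[Z_H^2\mathbf{1}\{Z_H-Z_L>k\}+Z_L^2\mathbf{1}\{Z_H-Z_L\le k\}]=1$ for every $k$. So selection does not change the second moment of the selected Stage-1 mean about the null value.

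It follows that the pooled statistic in (a), like the binary $Z$-test, has $\mathbb{E}[Z^2]=1$ and hence $\Var(Z)=1-b^2$. Here $b^2=\rho^2 w_1 e^{-k^2/2}/\pi$ and $\rho=\Corr(S(\tau),U)$. This perturbation is not ``of order $1/n_1$ relative to the null variance,'' and the weight $w_1^2$ does not make it negligible. It is of order $w_1$ relative to the pooled variance and stays fixed as $n_1,n_2\to\infty$ with $w_1$ fixed. In the binary simulations at $n_1=100$ it is about $0.15$.

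Unit variance is therefore correct only to first order in $b$, i.e.\ as $w_1\to 0$. For the paper's $w_1\in[0.2,0.5]$, $N(b,1)$ is a modeling assumption, and that is exactly how the paper's proof uses it, without further justification. You should drop the ``negligible variance'' argument and state the formulas under the posited model. The neglected variance shrinkage is also consistent with the binary plug-ins overestimating the observed Type~I error in Table~\ref{tab:type1_results}.
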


\begin{proof}
Each case follows three steps: identify the bias in the test statistic, specify the rejection region, and apply the shifted normal model.

\medskip
\noindent\textbf{(a) Landmark survival Z-test.}
The indicator $S_i(\tau) = \mathbf{1}\{T_i > \tau\}$ is binary, so the bias analysis is structurally identical to the binary response rate case in Section~\ref{sec:type_one_error_binary} of the main text. The combined-stage bias is $\Bias(\hat{S}(\tau))_{\mathrm{comb}} = w_1 \cdot \Bias(\hat{S}(\tau))$ by the dilution principle (Equation~\eqref{eq:combined_bias}). The test statistic
\[
Z_{\mathrm{Landmark}} = \frac{\hat{S}(\tau)_{\mathrm{comb}} - S_0}{\SE_0}, \qquad \SE_0 = \sqrt{\frac{S_0(1-S_0)}{n_1+n_2}},
\]
has bias $\Bias(Z_{\mathrm{Landmark}}) = \Bias(\hat{S}(\tau))_{\mathrm{comb}} / \SE_0 > 0$ under positive efficacy--survival correlation. Since rejection occurs in the upper tail ($Z \geq z_{1-\alpha}$) and
$Z_{\mathrm{Landmark}} \sim N(\Bias(Z_{\mathrm{Landmark}}), 1)$ approximately:
\[
\Pr(Z_{\mathrm{Landmark}} \geq z_{1-\alpha}) = 1 - \Phi\bigl(z_{1-\alpha} - \Bias(Z_{\mathrm{Landmark}})\bigr),
\]
yielding \eqref{eq:type1_landmark_app}.

\medskip
\noindent\textbf{(b) One-sample exponential test.}
By Proposition~\ref{prop:bias_chain}(b)--(c), the combined-stage log-hazard bias is:
\[
\Bias(\log\hat{\lambda})_{\mathrm{comb}} = \frac{\Bias(\hat{\lambda})_{\mathrm{comb}}}{\lambda_0} = -\lambda_0 \cdot w_1 \cdot \mathcal{B}.
\]

The test statistic is
$Z_{\mathrm{Exp}} = (\log\hat{\lambda} - \log\lambda_0)\sqrt{D}$, where $\Var(\log\hat{\lambda}) \approx 1/D$ by the exponential MLE theory~\citep{lawless2003statistical}. The bias in the test statistic is:
\[
\Bias(Z_{\mathrm{Exp}}) = \Bias(\log\hat{\lambda})_{\mathrm{comb}} \cdot \sqrt{D} = -\lambda_0 \cdot w_1 \cdot \mathcal{B} \cdot \sqrt{D} < 0.
\]

Rejection occurs in the lower tail ($Z_{\mathrm{Exp}} \leq -z_{1-\alpha}$). Under $Z_{\mathrm{Exp}} \sim N(\Bias(Z_{\mathrm{Exp}}),\, 1)$:
\begin{align*}
\Pr(Z_{\mathrm{Exp}} \leq -z_{1-\alpha})
&= \Phi\bigl(-z_{1-\alpha} - \Bias(Z_{\mathrm{Exp}})\bigr) \\
&= \Phi\bigl(-z_{1-\alpha}
   + \lambda_0 \cdot w_1 \cdot \mathcal{B} \cdot \sqrt{D}\bigr),
\end{align*}
yielding \eqref{eq:type1_exp_app}.

\medskip
\noindent\textbf{(c) Two-sample Cox Wald test.}
By Proposition~\ref{prop:bias_chain}(c), $\Bias(\hat{\beta})_{\mathrm{comb}} = -\lambda_0 \cdot w_1 \cdot \mathcal{B}$. The Cox Wald statistic is:
\[
Z_{\mathrm{Cox}} = \hat{\beta} \cdot \sqrt{\frac{D_{\mathrm{total}}}{4}}.
\]
Under $H_0$ without selection bias, $\mathbb{E}[\hat{\beta}] = 0$. With selection bias, $\mathbb{E}[\hat{\beta}] = \Bias(\hat{\beta})_{\mathrm{comb}}$. By linearity of expectation (no further approximation beyond the asymptotic normality of $\hat{\beta}$;
cf.~\citealp{barndorffnielsen1994inference}):
\[
\mathbb{E}[Z_{\mathrm{Cox}}] = \Bias(\hat{\beta})_{\mathrm{comb}} \cdot \sqrt{\frac{D_{\mathrm{total}}}{4}}.
\]

Under $Z_{\mathrm{Cox}} \sim N\!\left(\mathbb{E}[Z_{\mathrm{Cox}}],\, 1\right)$ with lower-tail rejection:
\begin{align*}
\Pr(Z_{\mathrm{Cox}} \leq -z_{1-\alpha})
&= \Phi\!\left(-z_{1-\alpha}
   - \Bias(\hat{\beta})_{\mathrm{comb}}
     \cdot \sqrt{\frac{D_{\mathrm{total}}}{4}}\right) \\
&= \Phi\!\left(-z_{1-\alpha}
   + \lambda_0 \cdot w_1 \cdot \mathcal{B}
   \cdot \sqrt{\frac{D_{\mathrm{total}}}{4}}\right),
\end{align*}
yielding \eqref{eq:type1_cox_app}.

\medskip
\noindent\textbf{Sign verification.}
In all three cases, $\Cov(T,U) > 0$ implies $\mathcal{B} > 0$ (Equation~\eqref{eq:calB_def}), which shifts the test statistic toward the rejection region:
\begin{itemize}
\item[(a)] $\Bias(Z_{\mathrm{Landmark}}) > 0$ shifts toward the upper-tail critical value;
\item[(b)] $\Bias(Z_{\mathrm{Exp}}) < 0$ shifts toward the lower-tail critical value;
\item[(c)] $\Bias(Z_{\mathrm{Cox}}) < 0$ shifts toward the lower-tail critical value.
\end{itemize}
In each case, the resulting Type~I error exceeds the nominal~$\alpha$.
\end{proof}

\end{document}